\let\c@author\relax
\crefname{step}{Step}{Steps}
\title{Orientation does not help with 3-coloring a grid in online-LOCAL}
\author{Thomas Boudier}
  {Gran Sasso Science Institute, Italy}
  {thomas.boudier@gssi.it}
  {https://orcid.org/0009-0002-0119-9333}
  {}
\author{Filippo Casagrande}
  {Gran Sasso Science Institute, Italy}
  {filippo.casagrande@gssi.it}
  {https://orcid.org/0009-0009-6916-9289}
  {}
\author{Avinandan Das}
  {Aalto University, Finland \and \url{https://sites.google.com/view/avinandan/}}
  {avinandan.das@aalto.fi}
  {}
  {}
\author{Massimo Equi}
  {Aalto University, Finland \and \url{https://massimoequi.github.io/}}
  {massimo.equi@aalto.fi}
  {https://orcid.org/0000-0001-8609-0040}
  {}
\author{Henrik Lievonen}
  {Aalto University, Finland \and \url{https://henriklievonen.fi/}}
  {henrik.lievonen@aalto.fi}
  {https://orcid.org/0000-0002-1136-522X}
  {}
\author{Augusto Modanese}
  {Aalto University, Finland \and \url{https://augusto.modanese.net/}}
  {augusto.modanese@aalto.fi}
  {https://orcid.org/0000-0003-0518-8754}
  {}
\author{Ronja Stimpert}
  {Aalto University, Finland}
  {ronja.stimpert@aalto.fi}
  {https://orcid.org/0009-0000-5794-0034}
  {}
\authorrunning{T.~Boudier et al.}
\keywords{coloring, locally checkable labeling problems, online algorithms}
\begin{document}

\maketitle

\begin{abstract}
  The online-LOCAL and SLOCAL models are extensions of the LOCAL model where
  nodes are processed in a sequential but potentially adversarial order.
  So far, the only problem we know of where the global memory of the
  online-LOCAL model has an advantage over SLOCAL is 3-coloring bipartite
  graphs.
  Recently, Chang et al.\ [PODC 2024] showed that even in grids, 3-coloring
  requires $\Omega(\log n)$ locality in deterministic online-LOCAL.
  This result was subsequently extended by Akbari et al.\ [STOC 2025] to also
  hold in randomized online-LOCAL.
  However, both proofs heavily rely on the assumption that the algorithm does
  not have access to the orientation of the underlying grid.
  In this paper, we show how to lift this requirement and obtain the same lower
  bound (against either model) even when the algorithm is explicitly given a
  globally consistent orientation of the grid.
\end{abstract}


\section{Introduction}

The \emph{\olcl} model \cite{akbari23_locality_icalp} bridges the fields of
online and distributed algorithms.
As in standard online algorithms, the model consists of a centralized
computational instance that must solve a task on a graph presented as a
(potentially adversarially constructed) stream.
The twist is that the algorithm maintains a \emph{local} view of the graph that
grows as more nodes are revealed:
With every node $v$ that arrives, the view of the algorithm is augmented by all
connections in the $T$-neighborhood of $v$ (for some function $T = T(n)$ where
$n$ is the final size of the graph). 
For every new node $v$ that is revealed, the algorithm must immediately commit
to a solution at $v$ using the view that has been revealed thus far.
Unlike standard questions in online algorithms, we disregard the space and time
complexities required to produce the output and are interested only in its
\emph{locality} $T$.

A fundamental problem in the \olcl model is \emph{$3$-coloring grids}.
This problem is important because it gives a separation between \olcl and the
weaker but related \slocal model \cite{ghaffari17_complexity_stoc}.
Namely, there is an $O(\log n)$ strategy for the problem in (deterministic)
\olcl (which works not only in grids but also in bipartite graphs), whereas in
\slocal it has complexity $n^{\Omega(1)}$ \cite{akbari23_locality_icalp}.

More recently, it was shown that $\Omega(\log n)$ locality is necessary to solve
this problem \cite{chang24_tight_podc}, even in the presence of randomness
\cite{akbari25_online_stoc}.
The approach for this lower bound, as given by \textcite{chang24_tight_podc},
is based on the idea of a \emph{potential} function that is defined in terms of the
$3$-coloring of the grid.
The argument consists of two main steps:
\begin{enumerate}
  \item Show there is an adversarial strategy for constructing rows that
  guarantees the following: 
  Somewhere along the row, there exists a pair of nodes $u$ and $v$ whose
  potential difference (in absolute value) is $\Omega(\log n)$.
  \item Using this strategy, create one such row $R_1$ and another (arbitrarily
  constructed) row $R_2$ of the same size.
  (See \cref{fig:chang_square}.)
  Consider the nodes $u'$ and $v'$ of $R_2$ that correspond to $u$ and $v$ and
  position the nodes $u$, $v$, $u'$, and $v'$ as corners of a square with
  minimal height.
  If necessary, rotate $R_2$ by 180 degrees so that the potential differences in
  both horizontal sides of the square have the same sign.
  Since a full walk around the square has zero potential, $\Omega(\log n)$ space
  between the rows is required in order to \enquote{build down} the difference
  between $u$ and $v$.
  The distance between the rows immediately translates into a lower bound for
  the locality $T$.
  \begin{figure}
    \centering
    \includestandalone[width=.85\textwidth]{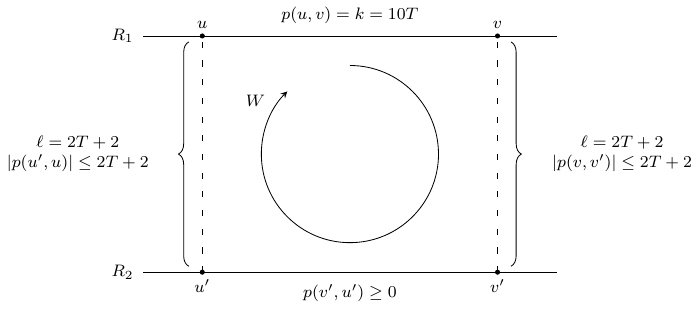}
    \caption{Illustration of the contradiction step in the proof of
    \textcite{chang24_tight_podc}.
    The first row $R_1$ is created with a potential difference $p(u,v) = 10T$
    between nodes $u$ and $v$.
    Parallel to this, we create $R_2$ and rotate it as necessary to ensure that we
    have nodes $u'$ and $v'$ parallel to $u$ and $v$ and such that $p(v',u') \ge
    0$.
    Finally, the two rows are revealed to be at distance $\ell = 2T+2$ from one
    another.
    The walk $W$ through $u$, $v$, $v'$, and $u'$ has potential $p(W) = p(u,v) +
    p(v,v') + p(v',u') + p(u',u) > 5T$, but the potential of any closed walk
    must be $0$ if the walk is properly 3-colored.}
    \label{fig:chang_square}
  \end{figure}
\end{enumerate}
Note that the second step in the proof requires a peculiar transformation, namely
rotating rows by 180 degrees, which potentially destroys the
orientation of the underlying grid.
The proof for the randomized case \cite{akbari25_online_stoc} exhibits the same
limitation.
Hence both proofs \emph{break down} if the input graph also contains the
orientation of the final grid.
\emph{Does this mean there is hope for an $o(\log n)$ strategy if we are given
the orientation of the grid?}

\paragraph*{Our Contribution}
In this paper, we answer this question in the negative for both the
deterministic and randomized settings.

\begin{theorem}
  \label{thm:main}
  In both the deterministic and randomized \olcl models, the complexity of
  $3$-coloring grids is $\Theta(\log n)$, even if the algorithm is given the
  orientation of the final grid.
\end{theorem}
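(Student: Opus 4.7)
My plan is to adapt the two-row square argument of Chang et al.\ so that it no longer relies on rotating one of the rows.  The overall structure is preserved: the adversary reveals two parallel rows $R_1, R_2$ at a vertical distance $\ell = 2T + 2$, out of each other's views, and the contradiction comes from the rectangle identity $p(u,v) - p(u',v') = p(u,u') - p(v,v')$, whose right-hand side is bounded by $2\ell$ in absolute value in any proper $3$-coloring.  The task, without recourse to rotation, is to force a horizontal potential discrepancy of $\Omega(\log n)$ at some common column pair directly from the construction of both rows.

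The first step is to apply the Chang row adversary to $R_1$, producing a pair $(u,v)$ with $|p(u,v)| = \Omega(\log n)$ of some known sign, say positive.  The second step is to construct $R_2$ adaptively so that the horizontal potential at the corresponding columns $(u', v')$ has the opposite sign or is much smaller in magnitude than $|p(u,v)|$.  The idea is that Chang's row adversary operates by a binary-search-like recursion in which the adversary's choice of subinterval at each level controls whether the partial potential grows up or down.  Because the algorithm is deterministic, the adversary can simulate the algorithm's response to various reveal orders of $R_2$ and pick an order---for instance, a mirrored or offset binary-search pattern---that produces the required potential at the specified columns.  If this succeeds, one obtains $|p(u,v) - p(u',v')| = \Omega(\log n)$, which contradicts the $2\ell = o(\log n)$ upper bound from the rectangle identity.

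The main obstacle I expect is the coupling between the column position and the sign of the large-potential pair in Chang's recursion: forcing both simultaneously with a single invocation of the strategy may not be possible.  The fallback is a pigeonhole over polylogarithmically many candidate second rows, each placed far enough from the others to be processed independently by the algorithm and each revealed with a distinct pattern; one then argues that for at least one candidate, some column pair witnesses a horizontal discrepancy of $\Omega(\log n)$ against $R_1$.  Extending the result to the randomized \olcl model follows the approach of Akbari et al.: by Yao's principle it suffices to produce a hard input distribution, and this in turn follows from applying the deterministic adversary above to a worst-case realization of the algorithm's random bits.
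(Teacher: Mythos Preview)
Your proposal has a genuine gap at its central step. You hope to build a second row $R_2$ whose horizontal potential at the columns $(u',v')$ is either of the opposite sign to $p(u,v)$ or much smaller in magnitude. But this is exactly what a correct algorithm can prevent, and in fact \emph{must} prevent. The paper's \cref{lem:difference-potential} shows that for any two separately revealed rows and any two equal-length, equally oriented subpaths $W_1,W_2$ on them, one has $\abs{p(W_1)-p(W_2)}\le 4T+4$: otherwise the adversary could already align the rows and close the rectangle for a contradiction. In other words, once the algorithm survives the basic two-row argument, \emph{every} row it colors has potential increasing in the same cardinal direction at essentially the same rate, regardless of reveal order. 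Your ``mirrored or offset binary-search pattern'' cannot flip the sign, and your pigeonhole fallback over many candidate $R_2$'s finds nothing: all candidates have $p(u',v')$ within $O(T)$ of $p(u,v)$, so the rectangle discrepancy is always $O(T)$, never $\Omega(\log n)$.

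This is precisely why the paper abandons the two-parallel-rows picture and works in two dimensions. It first leverages the very obstruction above (\cref{lem:row-boosting}) to show that potential grows quasilinearly along rows \emph{and} columns in fixed, consistent directions; this forces an L-shaped path whose total potential is zero, hence a \emph{diagonal} of some slope $\theta$ along which the potential is near-constant (\cref{lem:L-path-construction,lem:diag-constant-potential}). The contradiction then comes from a separate \cref{lem:slope-boosting} construction---a parallelogram-shaped potential-boosting argument along slope $\theta$---which produces a parallel diagonal with potential $10T$ and bounded height, so that closing the loop between the two diagonals yields nonzero potential on a closed walk. The randomized extension replaces each adaptive choice (placement at each boosting level, L-path orientation, choice of $u_1,v_1$, etc.) by a uniform guess and repeats $n^{o(1)}$ times; your invocation of Yao is in the right spirit but the paper's accounting of choices and node budget is what actually makes it go through.
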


With an orientation, we mean a globally consistent labeling of edges such that
every node knows the cardinal direction of each of its neighbors (i.e., north,
east, west, or south).
Since our contribution consists of enhancing the previous proofs so that they
also hold in this more general setting, it is necessary to present some key
ideas of those proofs first, for which we take a short detour at this juncture
before returning to the discussion of the actual proof.

\paragraph*{The Proof Strategy of \textcite{chang24_tight_podc}}

We now describe the proof of \textcite{chang24_tight_podc} in enough detail for
the unfamiliar reader to be able to follow the rest of the discussion in this
introduction.
Further technicalities needed for the proof in this paper are postponed to
\cref{sec:chang-lb}.

As already foreshadowed, proving a lower bound in \onlinelocal of $f(n)$ for
some function $f$ generally entails devising an adversarial strategy that
succeeds in \enquote{breaking} an algorithm with locality less than $f(n)$.
Hence, the proof of \textcite{chang24_tight_podc} (as well as that of
\textcite{akbari25_online_stoc} and ours) is concerned with describing such a
strategy.

The strategy of \textcite{chang24_tight_podc} bases on a \emph{potential
function} $p$ defined on walks.
The function $p$ counts the (signed) number of transitions between colors $1$
and $2$.
(See \cref{sec:chang-lb} for a precise definition.)
It also has some nice properties (\cref{lem:potential-properties}):
For a closed walk $W$, we have $p(W) = 0$, and the parity of $p(P)$ on a path
$P$ is uniquely determined by the colors of its endpoints and (the parity of)
its length.

As described above, the proof of \textcite{chang24_tight_podc} is structured
into two stages.
The first of these concerns devising a \emph{potential-boosting} strategy to
construct a row with (nearly) logarithmic potential increase $k = 10T$ (from
left to right).
Using this strategy, in the second stage, we obtain the contradiction proper by
constructing one such path $P_1$ as well as a parallel path $P_2$ that can be
transformed to have non-negative potential (again from left to right) assuming
the relative orientation of $P_1$ to $P_2$ is not known by the algorithm.
As already explained above, we can then position $P_1$ and $P_2$ adequately to
deduce $T = \Omega(\log n)$.

\paragraph*{Technical Overview}

To obtain our \cref{thm:main}, we need to rework the strategy just outlined in
\emph{several regards}:
\begin{enumerate}
  \item We argue that the potential-boosting strategy given by
  \textcite{chang24_tight_podc} can be used to obtain not only a logarithmic but
  \emph{quasilinear} potential difference along a row
  (\cref{sec:qlinear-boosting}).
  \item We demonstrate how the original logarithmic boosting strategy can be
  generalized so that we can construct not only rows but (approximately) lines
  in the plane of any desired slope $\theta \in [0,2\pi)$
  (\cref{sec:slope-boosting}).
  (Although we could also obtain quasilinear potential boosting in this extended
  case, we refrain from pursuing this extension in the present paper since it is
  not needed for the result.)
  \item Finally, we provide a completely novel approach to obtain the final
  contradiction that does not rely on transforming any object previously
  revealed in the construction (\cref{sec:contradiction}).
\end{enumerate}
Overall, compared to the simpler proofs of
\textcite{chang24_tight_podc,akbari25_online_stoc}, we must exploit the
\emph{full two-dimensional aspect} of the problem.

Next, we present the three different aspects just outlined in more detail. 
Along the way, we discuss the main difficulties to be overcome while sketching
an outline of the full argument.
From here on, we assume we are given a deterministic \onlinelocal algorithm with
locality $T = o(\log n)$ that we wish to prove to be incorrect for solving the
$3$-coloring problem.

\subparagraph{Step 1: quasilinear potential boosting.}\label{quasi-potential-boosting}
Let us inspect the potential-boosting strategy of \textcite{chang24_tight_podc}
more closely.
It consists of an iterative process that increases the guaranteed potential
difference by at least one unit at every step while at the expense of (roughly)
doubling the size of the construction.
Since there are only $n$ nodes at our disposal, it appears that the best we can
hope for is a $\Omega(\log n)$ potential difference; otherwise, we would exceed
our \enquote{budget} of nodes.

However, as it turns out, there is a more clever argument that we can apply:
Suppose we have constructed \emph{two} distinct rows $R_1$ and $R_2$ (not
necessarily using the aforementioned strategy) of roughly the same size whose
relative location is yet to be revealed.
Consider any two paths $P_1$ and $P_2$ of equal length on $R_1$ and $R_2$,
respectively, where $P_1$ and $P_2$ have the same orientation (e.g., both $P_1$
and $P_1$ go from west to east).
Let $p(P_i)$ denote the potential difference between the nodes at the start and
end of path $P_i$.
Then, the quantity $\abs{p(P_1) - p(P_2)}$ cannot be too large; otherwise we
could reveal $R_1$ and $R_2$ to be close to each other so that the endpoints of
the paths align and directly obtain a contradiction as in the second stage of
the proof of \citeauthor{chang24_tight_podc}.
To be precise, if we can reveal the respective endpoints of $P_1$ and $P_2$ to
be at distance $d$ from one another, then necessarily $\abs{p(P_1) - p(P_2)} \le
2d$ as the potential cannot drop by more than one unit at each node.
(In fact, as already shown in \textcite{chang24_tight_podc}, the potential drops
by at most one unit every \emph{three} nodes, so we even have a stricter upper
bound of $2d/3$.)

Imagine now that $R_1$ was constructed according to the logarithmic boosting
strategy and $P_1$ is a path guaranteeing $\abs{p(P_1)} \ge 10T$ (which is
possible since $T = o(\log n)$).
Because we can certainly reveal two separate rows at distance $d \le 3T$, we
\emph{immediately} get that \emph{any other path} $P_2$ of the same length as
$P_1$ in any other separately constructed row $R_2$ has $\abs{p(P_2)} \ge 4T$.
Assuming we needed roughly $2^{10T}$ nodes in $R_1$ to construct such a $P_1$,
it follows that any row of length $\ell \cdot 2^{10T}$ has potential difference
at least $4T \ell$.

\subparagraph{Step 2: boosting along arbitrary slopes.}
Although the logarithmic boosting strategy is defined to work on a row, it is
conceivable to try and apply it to any line $L$ on the plane having a slope
$\theta \in [0,2\pi)$.
Certainly, we cannot obtain a perfect match to any such line since we can only
position points on the lattice $\Z \times \Z$, but it is nevertheless easy to
see that we can pick a series of points in $\Z \times \Z$ that are at distance
from $L$ that is less than $1$.
Hence it seems plausible that one could just perform the construction using this
set of points instead, thus obtaining a difference in potential also for points
at marginal distance from $L$.

However, recall that the boosting strategy requires \emph{shifting} a fragment
of the construction relative to another so that the potential difference
increases.
Since $L$ may avoid points on $\Z \times \Z$ entirely, it is
simply \emph{not possible} to port the original potential boosting argument to
this case without substantial modifications.
In particular, the resulting object containing a path with high potential
difference \emph{necessarily} is a (true) two-dimensional construct with a
non-zero area.

This issue alone would not be very dramatic if we did not much care about the size
of the resulting object.
Unfortunately, the argument for the final contradiction (see further below) is
very sensitive to the height of this object.
In fact, we hope for an object guaranteeing a potential difference of $k$ and
having a height that does not exceed (roughly) $3k/2$.

To cope with these issues, we provide a careful and systematic construction that
minimizes the height of the resulting object.
More precisely, the object we construct is a \emph{parallelogram} where two of
its sides are vertical and of height $k+1$ and the other two sides have the
desired angle $\theta$ respective to the horizontal axis.
Along the way, we also handle several other minor technicalities that require
special care due to the two-dimensional nature of these objects.

\subparagraph{Step 3: obtaining a contradiction.}
Finally, we describe the idea for arriving at a contradiction.
The most intuitive way to grasp it is to imagine what kind of strategy an
algorithm would need to follow in order to not immediately fail against the
quasilinear boosting of Step 1.
For the sake of exposition, let us assume that we can achieve not only
quasilinear boosting but actually linear boosting.
This means that, along a row, the algorithm \emph{must} be doing at least one
transition from color $1$ to color $2$ every $r$ steps for some constant $r$
(that is not nullified by a subsequent transition back from $2$ to $1$).
Again, for the sake of simplicity, suppose that the algorithm does not perform
$2$ to $1$ transitions at all.
Hence, the potential is \emph{monotonically increasing} (with a slope of nearly
$1/r$) along the row.

Suppose we continue this reasoning until we have a row $R$ from $u$ to $v$ with
$n^{1/10}$ nodes (and thus $\Omega(n^{1/10})$ potential difference).
Here, we select the node $v$ at the far right and start revealing nodes along its
\emph{column} $C$.
By a similar argument as in Step 1, the algorithm also has to ensure a linear
potential difference along $C$ and, since $C$ has two directions, one of these
must be decreasing in potential.
If we continue revealing nodes in this direction, then eventually we reach a
node $w$ that has a potential difference of \emph{zero} to $u$.
This means we have found a \emph{diagonal} $D$ going from $u$ to $w$ that has
zero potential (see \cref{fig:l-shape}).

\begin{figure}
    \centering
    \includestandalone[width=.5\textwidth]{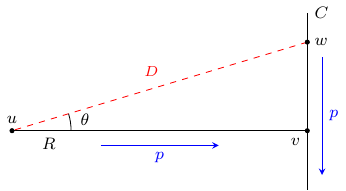}
    \caption{Illustration of constructing a diagonal $D$ with $p(D) = 0$.
    The blue arrows indicate the direction in which the potential is increasing.
    The potential increases (near) linearly from $u$ to $v$ along $R$, and so
    along $C$ we can find a node $w$ that has potential difference zero to $u$,
    thus defining $D$, which forms an angle of $\theta$ with $R$.}
    \label{fig:l-shape}
\end{figure}

With further observations, we can argue that $D$ must contain a path $P_0$ that
is not too long, say $2^{10T}$ nodes in total, that itself has potential zero.
Letting $\theta$ be the angle between $R$ and $D$, we use the construction of
Step 2 to obtain a parallel path $P_1$ with potential $k = 10T$.
Finally, we reveal $P_1$ to be at distance $2T + 2$ from $P_0$.
Having done so, we can argue as in the original contradiction proof of
\textcite{chang24_tight_podc} that the closed walk around $P_0$ and $P_1$ must
itself have potential zero, thus obtaining the desired contradiction.

As a final technicality, recall that in Step 2 it was mentioned that we desire
$P_1$ to be inside an object that has moderate height.
We can now observe where exactly this is needed:
Since the objects in Step 2 have some height $h$, the distance between $P_0$ and
$P_1$ is not only $2T + 2$ but actually this value plus potentially $h$ (if
$P_1$ is, say, at the very top of the constructed object).
Hence, since we can only control the potential along $P_0$ and $P_1$ and not
over the segments that connect the two, we certainly need $h = O(k)$ and,
moreover, we care about the actual constant.
Thankfully, it is a straightforward observation that the potential of a walk $W$
cannot be larger than $\abs{W}/3 + c$ (\cref{lem:pot-upper-bound}) for a
constant $c$.
This fact was not explicit in the work of \textcite{chang24_tight_podc} (and was
not needed for the results there), and so we give a short proof of it in
\cref{sec:chang-lb}.
Using this fact together with our Step 2 that guarantees $h = k+1$, we obtain
that the segments connecting $P_0$ and $P_1$ have length at most $h+2T+2$ and
thus contribute at most (roughly) $(h+2T)/3 +c \approx 4T$ in potential each,
and hence we still have a remaining potential of $k - 8T = 2T > 0$ for the
closed walk around $P_0$ and $P_1$.

\subparagraph{The randomized lower bound.}
The above strategy describes an adversary against \emph{deterministic}
\onlinelocal, which is allowed to make adaptive choices during the lower bound
construction.
For the randomized case, we must instead design a strategy for an
\emph{oblivious} adversary.
Following \textcite{akbari25_online_stoc}, we describe how to transform all of
the adaptive choices during the construction into non-adaptive ones by simply
\enquote{guessing} and boost the success probability by repetition.
In retrospect, this can be seen as a standard technique for porting lower bounds
from \detolcl to the randomized model.
Since this technique is very general and applies to all lower bounds for
\onlinelocal that the authors are aware of, there is reason to believe a
separation between the two models (which remains an interesting open problem)
will be very challenging to obtain.
In fact, in light of related results by \textcite{akbari25_online_stoc}, it is
plausible to believe one can obtain derandomization results for \onlinelocal in
a very wide range of settings.

\paragraph*{Outline}
The rest of the paper is structured as follows:
\Cref{sec:preliminaries} introduces main notation and the models at hand.
In \cref{sec:chang-lb} we review in full detail the core technical aspects of
\cite{chang24_tight_podc} that are needed for the proof.
Following that, in \cref{sec:det-lb,sec:rand-lb} we prove our result for
deterministic and randomized \onlinelocal, respectively.


\section{Preliminaries}
\label{sec:preliminaries}

We write $\N_+$ for the set of positive integers, $\N_0$ for $\N_+ \cup \{ 0
\}$, and $\Z$ for the set of all integers.
The set of the first $n \in \N_+$ positive integers is denoted by $[n]$.

An \emph{$(n\times n)$-oriented grid} $G$ is a directed graph defined over the vertex set $[n] \times [n]$, where $n \in \N_+$. The edge set of $G$ consists of consistently oriented edges defined as follows:
\begin{itemize}
    \item For all $1 \le i \le n$ and $1 \le j < n$, there is a directed edge from $(i,j)$ to $(i,j+1)$.
    \item For all $1 \le i < n$ and $1 \le j \le n$, there is a directed edge from $(i,j)$ to $(i+1,j)$.
\end{itemize}
The orientations of the edges are uniform and not arbitrary. Any graph that is isomorphic to $G$ via a bijection that preserves edge directions is also referred to as an \emph{oriented grid}.

We emphasize that while $G$ is defined as a directed graph, the edge
orientations serve as auxiliary information. In particular, graph-theoretic
notions such as distance are defined with respect to the underlying undirected
graph (i.e., by ignoring the directions of the edges).

With this in place, we can formally state the $3$-coloring problem considered in
this paper:

\begin{quote} \itshape
  Given an $(n\times n)$-oriented grid $G$, output a proper 3-coloring of $G$.
\end{quote}

Finally, let us define the two models in question.
For a vertex $v \in V(G)$ and $T \in \N_0$, we write $B(v, T)$ for the set of
vertices within (undirected) distance at most $T$ from $v$.

\subparagraph{\boldmath The \olcl model.}
The \olcl model was introduced by~\cite{akbari23_locality_icalp}. In this setting, the nodes of an $n$-vertex input graph $G$ are revealed one by one in an adversarial order $v_1, v_2, \ldots, v_n$. When the $i^{\text{th}}$ vertex $v_i$ is revealed, an \olcl algorithm with locality $T$ must assign an output label to $v_i$, based solely on the graph induced by the union of $T$-radius neighborhoods of the vertices revealed so far, that is, $G\left[\bigcup_{j \le i} B(v_j, T)\right]$, and on the sequence of previously revealed vertices $v_1, \ldots, v_i$ (along with their assigned labels).

\subparagraph{\boldmath The \rolcl model.}
The \ROlcl model~\cite{akbari25_online_stoc} is a randomized variant of the \olcl model. In this setting, the algorithm is additionally given access to a string of random bits. However, the adversary is non-adaptive—the input graph $G$ and the vertex reveal order $v_1, \ldots, v_n$ are fixed independently of the algorithm's random choices. As in the deterministic case, when vertex $v_i$ is revealed, the algorithm must assign it a label using the subgraph $G\left[\bigcup_{j \le i} B(v_j, T)\right]$, the order of revealed vertices so far, and its internal randomness.

\section{Relevant Notions from
\texorpdfstring{\textcite{chang24_tight_podc}}{Chang et al. [Cha+24]}}
\label{sec:chang-lb}

In this section, we recall notions and results from
\textcite{chang24_tight_podc} that are needed to establish our result.
The concept of central importance is that of a potential function defined on
paths of the grid, which itself is based on the 3-coloring assigned to the grid.
Fix a $3$-colorable graph $G=(V,E)$ and a proper coloring $c: V \to \{1,2,3\}$
of $G$.
The potential function is defined as follows:

\begin{definition}[Potential function]
 The potential function $p$ is defined as follows:
 For any edge $\{u,v\} \in E$, we let
    \begin{equation*}
      p(u, v):=
      \begin{cases}
          c(u) - c(v) & \text{if $c(u) \neq 3$ and $c(v) \neq 3$,} \\
          0           & \text{otherwise.}
      \end{cases}
    \end{equation*}
    Given any walk $W$, we then let
    \begin{equation*}
      p(W) := \sum_{\{u,v\} \in W} p(u,v).
    \end{equation*}
\end{definition}

Note that the potential of any edge is in the set $\{-1, 0, 1\}$. 
In other words, we can see the potential function as a storage of potential over
a path. 
Crossing the path from color $2$ to color $1$ increases the potential by
$1$, and vice-versa.
As proven by \textcite{chang24_tight_podc}, the potential function satisfies
certain important properties:

\begin{lemma}[Properties of the potential function $p$]
  \label{lem:potential-properties}
  The function $p$ satisfies the following:
  \begin{enumerate}
    \item For any simple directed walk $W$ in a grid, $p(W) = 0$.
    \item For any path $P$ of length $\ell$ from $u$ to $v$, we have $p(P)
    \equiv i(u) + i(v) + \ell \pmod 2$, where $i(v)$ (resp., $i(u)$) is $1$ if
    $c(v) = 3$ (resp., $c(u) = 3$) or $0$ otherwise.
  \end{enumerate}
\end{lemma}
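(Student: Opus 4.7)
The plan is to handle the two properties separately, reducing each to an edge-level check and then extending by additivity of $p$.

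For Property~1, I would first establish a face lemma: every unit $4$-cycle in the grid has potential $0$. Since a proper $3$-coloring forbids two adjacent vertices from sharing color, in any $4$-cycle the color-$3$ vertices, if any, form an independent set, leaving just three cases to verify. If no vertex is colored $3$, the colors must alternate $1,2,1,2$, and the four edge potentials telescope to $0$. If exactly one vertex is colored $3$, its two incident edges contribute $0$, while the two remaining edges connect equal-colored endpoints through a vertex of the opposite color and again telescope to $0$. If two opposite vertices are colored $3$, every edge has a color-$3$ endpoint, so every individual potential vanishes. Once the face lemma is in hand, I would extend it to arbitrary closed walks using the planarity of the grid: any closed walk bounds a region that decomposes as a signed sum of unit faces, so by additivity of $p$ the potential of the closed walk equals the sum of the face potentials, hence $0$.

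For Property~2, the plan is a single per-edge parity computation followed by telescoping. For any edge $e=\{x,y\}$, since adjacent vertices cannot both be color $3$, we have two sub-cases: either both colors lie in $\{1,2\}$, giving $p(x,y)=\pm 1$ and $i(x)+i(y)=0$; or exactly one endpoint is color $3$, giving $p(x,y)=0$ and $i(x)+i(y)=1$. In either case one checks $p(x,y)\equiv 1+i(x)+i(y)\pmod 2$. Summing along a path $P=v_0 v_1\cdots v_\ell$ from $u$ to $v$, the interior contributions $i(v_j)$ for $1\le j\le \ell-1$ each appear twice and cancel modulo $2$, leaving
\[
  p(P) \equiv \sum_{j=0}^{\ell-1}\bigl(1+i(v_j)+i(v_{j+1})\bigr) \equiv \ell + i(u) + i(v) \pmod 2,
\]
which is the desired congruence.

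The only step I expect to require any care is the global extension in Property~1. For genuinely simple closed walks (i.e., cycles), the face-decomposition argument via planarity is textbook. If the statement is to be read for arbitrary closed walks that may revisit vertices or traverse edges in both directions, one additionally needs the antisymmetry $p(u,v)=-p(v,u)$ together with a short induction on the number of simple cycles in the walk, or equivalently a cycle-space argument over $\mathbb{Z}$. Both routes are routine, and the rest of the proof reduces to the finite case analyses above.
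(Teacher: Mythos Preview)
Your proof is correct. Note, however, that the paper does not actually prove this lemma: it is stated with the preface ``As proven by \textcite{chang24_tight_podc}'' and no argument is given in the present paper. So there is no in-paper proof to compare against.

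That said, your approach is the standard one and matches what one finds in the original source: verify the face lemma on unit $4$-cycles by the short case analysis you describe, then extend to arbitrary closed walks using that the face boundaries generate the cycle space of the planar grid (together with antisymmetry $p(u,v)=-p(v,u)$ for non-simple closed walks). Your per-edge parity identity $p(x,y)\equiv 1+i(x)+i(y)\pmod 2$ and the telescoping for Property~2 are exactly right. One cosmetic remark: the lemma's phrase ``simple directed walk'' should be read as ``simple closed walk'' (this is how the paper uses it throughout, e.g., in the contradiction step of \cref{sec:contradiction}); your interpretation is the intended one.
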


In addition, we also use the following property, which is not stated explicitly
in \cite{chang24_tight_podc} but is simple to prove.

\begin{lemma}[Upper bound on $p$.]
  \label{lem:pot-upper-bound}
  For any walk $W$, $p(W) \le \abs{W}/3 + O(1)$.
\end{lemma}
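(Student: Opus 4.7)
The plan is to establish the stronger local statement that along any walk, the sum of potentials over every window of three consecutive edges is at most $1$. Granting this, one can partition the edges of $W$ into $\lfloor \abs{W}/3 \rfloor$ disjoint consecutive triples plus at most two leftover edges; since each triple contributes at most $1$ and each loose edge contributes at most $1$, summing yields
\[
  p(W) \;\le\; \lfloor \abs{W}/3 \rfloor + 2 \;\le\; \abs{W}/3 + O(1),
\]
as desired.

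To prove the local claim on three consecutive edges $e_1 = (a,b)$, $e_2 = (b,c)$, $e_3 = (c,d)$, I would proceed by a short case analysis. Each $p(e_i)$ lies in $\{-1, 0, 1\}$, so a sum exceeding $1$ forces at least two of the three potentials to equal $+1$. By the definition of $p$, an edge has potential $+1$ precisely when its endpoints receive colors $2$ then $1$ (in particular, both lie in $\{1,2\}$). Two edges sharing a vertex therefore cannot both have potential $+1$, since the shared vertex would be forced to have colors $1$ and $2$ simultaneously; this rules out the adjacent pairs $(e_1, e_2)$ and $(e_2, e_3)$. The only remaining configuration is $p(e_1) = p(e_3) = 1$, which fixes $c(b) = 1$ and $c(c) = 2$, and hence $p(e_2) = c(b) - c(c) = -1$, giving total potential $1$ — contradicting the hypothesis that the sum exceeds $1$.

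I do not foresee a substantial obstacle: the reasoning is purely local and depends only on the colors of the (at most four) vertices in the window, so nothing changes if $W$ revisits vertices. The constant hidden in $O(1)$ can be taken as $2$, accounting for the incomplete block at the end of the partition; a symmetric argument would give the matching lower bound $p(W) \ge -\abs{W}/3 - O(1)$, though only the upper bound is needed later.
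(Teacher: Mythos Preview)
Your proposal is correct and follows essentially the same approach as the paper: both argue that any window of three consecutive edges has total potential at most $1$ and then partition $W$ into $\lfloor |W|/3\rfloor$ such blocks plus $O(1)$ leftover edges. Your case analysis (ruling out two adjacent $+1$ edges via the shared vertex, then checking $p(e_1)=p(e_3)=+1$ forces $p(e_2)=-1$) is organized a bit differently from the paper's (which fixes $p(u_1,u_2)=1$ and branches on $c(u_3)\in\{2,3\}$), but the content is the same.
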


\begin{proof}
  Recall that, for every edge in $W$, the value of $p$ can only increase or
  decrease by one.
  We argue without restriction that, for any subpath $W' = (u_1,u_2,u_3,u_4)$ of
  length $3$ where $p(u_1,u_2) = 1$, $\abs{p(W')} \le 1$.
  This is easy to see:
  By definition of $p$, $c(u_1) = 2$ and $c(u_2) = 1$.
  Then, since $c$ is a proper $3$-coloring, either $c(u_3) = 2$, in which case
  $\abs{p(W)} = \abs{p(u_3,u_4)} \le 1$, or $c(u_3) = 3$ and then $p(W) =
  p(u_1,u_2) = 1$.
\end{proof}

The main technical result of \textcite{chang24_tight_podc} that we build on is
the \emph{potential boosting lemma}, with which we can create a path with
logarithmic potential difference. 

\begin{lemma}[Logarithmic potential boosting]
  \label{lem:chang-potential-boosting}
  There is an adversarial strategy against $o(\log n)$-locality \detolcl
  that, given any $k \in O(\log n)$ and any desired length $N \in \N_+$ with $N
  < \sqrt{n}$, constructs a row of length $N$ containing a path $P$ with
  $\abs{p(P)} = k$.
\end{lemma}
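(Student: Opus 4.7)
The plan is to construct the adversarial strategy $\sigma_k$ by induction on $k$, maintaining the invariant that $\sigma_k$ reveals a row of length $N_k$ that contains a subpath with $\abs{p} = k$. I aim for a doubling recursion, $N_{k+1} \le 2 N_k + O(T)$, so that $N_k = 2^{O(k)} \cdot T^{O(1)} = o(\sqrt{n})$ for $k \in O(\log n)$ and $T \in o(\log n)$, comfortably fitting within the grid and so respecting the hypothesis $N < \sqrt{n}$.

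For the base case $k = 1$, I would reveal a short row of constant length. By \cref{lem:potential-properties}, the potential along a path of length $\ell$ has the parity of $\ell + i(u) + i(v)$, and since the per-edge contribution lies in $\{-1, 0, +1\}$, a judicious choice of subpath is forced to realize $\abs{p} = 1$ exactly.

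For the inductive step, given $\sigma_k$, my adversary invokes it twice to reveal two disjoint rows $R_1$ and $R_2$, placed far enough apart that their $T$-neighborhoods do not overlap. By induction each $R_i$ contains a subpath $P_i$ with $p(P_i) = s_i \cdot k$ for signs $s_i \in \{-1, +1\}$ that are determined by the algorithm's colorings and are a priori unknown to the adversary. The adversary then fills in the unrevealed space between $R_1$ and $R_2$ by revealing a short connector $C$ joining them into a single row; the length of $C$ and the side on which each $R_i$ is placed are chosen \emph{after} observing $s_1$ and $s_2$. A brief case analysis on $(s_1, s_2)$, using \cref{lem:potential-properties} to control $p(C) \bmod 2$, then guarantees a subpath of absolute potential at least $k+1$ in the resulting row; since $p$ changes by at most one per edge, some prefix realizes exactly $\abs{p} = k+1$.

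The hard part will be handling the \emph{sign ambiguity} of $p(P_i)$: since the algorithm controls the colors, the adversary cannot commit to a single-gadget extension strategy and must reveal a second gadget in order to defer the placement decision until both colorings are fixed. This is precisely the reason the construction pays a doubling cost per inductive step rather than a constant additive cost. The hypothesis $N < \sqrt{n}$ is used exactly here: the adversary needs enough unrevealed space in the grid to realize any of the candidate final placements without prematurely exposing conflicting structure through the $T$-balls already revealed.
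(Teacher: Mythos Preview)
Your proposal follows the same inductive doubling construction as the paper (which is recapitulating \textcite{chang24_tight_podc}): two level-$(k{-}1)$ rows joined by a connector of adaptively chosen length $\ell \in \{2T+2,\,2T+3\}$ so as to force the parity of the middle segment. One refinement worth noting: the sign ambiguity you flag as the hard part is absorbed for free---since potential in the grid depends only on endpoints (first item of \cref{lem:potential-properties}), one simply orients each $P_i$ so that $p(P_i) = +(k{-}1)$, and then $p(u_1 \to v_2) = (k{-}1) + p(v_1 \to u_2) + (k{-}1)$ holds regardless of the left/right layout of the $u_i,v_i$, so no ``side'' choice is needed; the doubling is forced purely by the arithmetic $2(k{-}1) - (k{-}2) = k$, not by the sign issue.
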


Since we adapt the construction of \cref{lem:chang-potential-boosting}
(Lemma~3.6 in \cite{chang24_tight_podc}), we revise it in more detail.
The idea is to proceed inductively on $k$, where for $k = 0$ we start with a
single point of the grid.
It is useful to refer to the objects constructed to guarantee a path having
potential difference $k$ the as \emph{levels} of the construction.
Having built two rows $R_1$ and $R_2$ each at level $k-1$, we align $R_2$ to the
right of $R_1$ along the same horizontal axis, choose the distance to place
it from $R_1$, and then reveal all the nodes in between.

The relevant quantity here is the distance $\ell$ of the rightmost node of $R_1$
to the leftmost one of $R_2$, which we pick to be either $2T + 2$ or $2T + 3$.
Since $\ell > 2T + 1$, the views of $R_1$ and $R_2$ do not overlap (and hence it
is justified to handle them as separate objects).
To choose between the two values $2T + 2$ and $2T + 3$, consider the following:
Suppose that in $R_i$, $i \in \{1,2\}$, we have a path $P_i$ from $u_i$ to $v_i$
that has potential $p(P_i) = k-1$.
(Note the sign of $p(P_i)$ is relevant here.)
Pick $\ell$ so that the value $p(P_3)$ of the path $P_3$ from $v_1$ to $u_2$ has
different parity from $k-1$.
(Here, we use the second item of \cref{lem:potential-properties}.)
This implies either we already have $\abs{p(P_3)} \ge k$ or then $\abs{p(P_3)}
\le k-2$ (since $\abs{p(P_3)} \neq k-1$), from which a simple calculation 
reveals that the path from $u_1$ to $v_2$ has potential difference at least $k$.


\section{\boldmath Lower Bound for \DetOlcl}
\label{sec:det-lb}

First, we prove that, without loss of generality, the potential increases everywhere in the north to south and west to east directions. We achieve that in \cref{lem:row-boosting} by using the adversarial strategy of \cref{lem:chang-potential-boosting}. Using this, we can deduce an adversarial strategy that creates a horizontal path $(a,b)$ and a vertical path $(c,b)$ meeting at node $b$, such that the potential of the walk $(a,\dots,b,\dots,c)$ is $0$. Using the first item of \cref{lem:potential-properties}, we then deduce that a path following the slope $A_0$ starting at $a$ and ending at $c$ must also have potential $0$. Finally, we show in \cref{lem:slope-boosting} that using a technique similar to \cite{chang24_tight_podc}, we can produce a slope $A_1$ with an orientation similar to $A_0$ but with too much potential increase to finish the parallelogram formed by $A_1$ and $A_0$ in accordance to \cref{lem:potential-properties}.

Throughout we assume the underlying graph $G$ is a \((\sqrt n\times \sqrt
n)\)-oriented grid and there is a \detolcl algorithm with locality $T = o(\log
n)$ that purportedly $3$-colors $G$.

\subsection{Quasilinear Boosting on a Row}
\label{sec:qlinear-boosting}

The first step is to reason about what occurs when two rows are brought next to
each other.

\begin{lemma}\label{lem:difference-potential}
  Let $L_1$, $L_2$ be two rows a distance of at least $2T+2$, such that they have not yet been revealed to be in the same component. Let $W_1$ be a walk on $L_1$ and $W_2$ be a walk on $L_2$, both of the same length. Then the potential difference $\abs{p(W_1)-p(W_2)}$ is at most $4T+4$.
\end{lemma}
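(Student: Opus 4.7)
The plan is to have the adversary reveal the connecting strip between the two rows so that $W_1$ and $W_2$ become the two horizontal sides of a rectangle, and then apply item~1 of \cref{lem:potential-properties} to the resulting closed walk.

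First, since $L_1$ and $L_2$ currently lie in distinct components of the revealed graph, the adversary is free to reveal them in the same component at any relative position consistent with the distance-at-least-$(2T{+}2)$ hypothesis. Write $u_i, v_i$ for the start and end of $W_i$, $i \in \{1,2\}$, traversed in the same horizontal direction. I would place $L_2$ parallel to $L_1$ at vertical distance exactly $2T+2$, horizontally shifted so that $u_1$ sits directly above $u_2$; since the two walks have equal length, $v_1$ then sits directly above $v_2$. Because no previously revealed vertex's $T$-neighborhood has extended into the opposite row or into the vertical corridors we are about to reveal, this extension is consistent with every output the algorithm has already committed to.

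Next, let $Q_1$ be the vertical path from $v_1$ down to $v_2$ and $Q_2$ the vertical path from $u_2$ up to $u_1$; both have length exactly $2T+2$. Form the closed walk $C = W_1 \cdot Q_1 \cdot \overline{W_2} \cdot Q_2$, where $\overline{W_2}$ denotes $W_2$ traversed in reverse. Item~1 of \cref{lem:potential-properties} yields $p(C) = 0$, which expands to
\[
  p(W_1) + p(Q_1) - p(W_2) + p(Q_2) = 0.
\]
Since every edge contributes a value in $\{-1,0,1\}$ to the potential, rearranging gives
\[
  \abs{p(W_1) - p(W_2)} = \abs{p(Q_1) + p(Q_2)} \le \abs{Q_1} + \abs{Q_2} = 4T+4.
\]

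The only real subtlety lies in the first step: I must be sure that the belated reveal of the connecting corridors cannot contradict any prior commitment of the algorithm. This is precisely what the distance-at-least-$(2T{+}2)$ hypothesis buys, since the $T$-balls of all previously processed vertices are, by construction, disjoint from the opposite row and from the interior of the corridor being unveiled. Once that is dispatched, the remainder is a direct unfolding of the definition of $p$ together with the conservation property of closed walks.
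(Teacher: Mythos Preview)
Your proof is correct and follows essentially the same approach as the paper: align the two rows at vertical distance exactly $2T+2$, form the rectangular closed walk through $u_1,v_1,v_2,u_2$, apply item~1 of \cref{lem:potential-properties} to get zero total potential, and bound the two vertical sides by their length $2T+2$ each. The only cosmetic difference is that the paper phrases it as a proof by contradiction (assuming $\abs{p(W_1)-p(W_2)}\ge 4T+5$) while you give the direct inequality.
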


\begin{proof}
  Suppose the algorithm discovered two walks of same length $W_1 = (u_1, v_1)$, $W_2=(u_2, v_2)$ such that $\abs{p(W_1)-p(W_2)} \ge 4T+5$. Without loss of generality, let $p(W_1)= K$ and $p(W_2)=K+ \eps (4T+5)$ with $\eps \in \{1, -1\}$. 
  The adversary then reveals that $W_1$ and $W_2$ are aligned at distance
  exactly $2T +2$. 
  By \cref{lem:potential-properties} the cycle $C$ defined by $(u_1, v_1, v_2,
  u_2, u_1)$ must have potential $0$: 
  \[
    p(W_1) + p((v_1, v_2)) - p(W_2) + p(u_2, u_1) = 0,
  \]
  and thus 
  $
    p((v_1, v_2)) + p((u_2, u_1)) = \eps (4T+5) 
  $. 
  However, $\abs{p(u_2, u_1)} = \abs{p(v_1, v_2)} = 2T+2$, so $\abs{p((v_1,
  v_2))}$ and $\abs{p((u_2, u_1))}$ are at most $2T+2$, a contradiction.
\end{proof}

With the above observation in place, we can improve the strategy of
\cref{lem:chang-potential-boosting} to give us nearly linear potential
difference along a row.

\begin{lemma}\label{lem:row-boosting}
  For $T = o(\log n)$,
  there exists an adversarial strategy that ensures the existence of a value
  $c = 1/2^{O(T)}$ such that we can build rows with
  the following property: for any subpath $W'$ of the row with $\abs{W'} \ge
  2^{10T}$, we have $\abs{p(W')} \ge c \abs{W'}$.
\end{lemma}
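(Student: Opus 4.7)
The plan is to lift the logarithmic bound of \cref{lem:chang-potential-boosting} to a linear one by combining it with \cref{lem:difference-potential} at two scales. First, I would invoke \cref{lem:chang-potential-boosting} with $k = 10T$ (permitted since $T = o(\log n)$) to construct, in its own component, a reference row $R^*$ of length $N^* = 2^{10T}$ containing a path $P^*$ with $|p(P^*)| = 10T$; WLOG $p(P^*) = +10T$. The target row $R$ is built in a separate component, and the adversary keeps $R^*$ and $R$ at distance greater than $2T+1$ throughout, so that \cref{lem:difference-potential} always applies to them.

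The central intermediate claim is a \emph{single-chunk bound}: every subpath $W'$ of $R$ of length $\ell \ge N^*$ satisfies $p(W') \ge M := 6T - 5$, with sign matching $p(P^*)$. To establish this, I would construct an auxiliary walk $W^\star$ on $R^*$ of length exactly $\ell$ by taking $P^*$ and appending $\lfloor (\ell - N^*)/2 \rfloor$ zigzags at one endpoint, plus a single extra step if $\ell - N^*$ is odd. Each zigzag contributes $0$ to the potential, and the optional single step contributes at most $1$ in magnitude, so $p(W^\star) \ge 10T - 1$. Applying \cref{lem:difference-potential} to $W'$ and $W^\star$ (both walks of length $\ell$) then yields $p(W') \ge 10T - 1 - (4T+4) = M$, with sign necessarily matching $p(W^\star)$ and hence $p(P^*)$.

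Given the single-chunk bound, the linear bound follows from chunking. For any $W'$ of length $\ell \ge N^*$, write $\ell = qN^* + r$ with $q \ge 1$ and $0 \le r < N^*$, and partition $W'$ into $q$ consecutive pieces where the first $q - 1$ have length exactly $N^*$ and the last has length $N^* + r$. Every piece has length at least $N^*$, so by the single-chunk bound each contributes potential of magnitude $\ge M$, all with the same sign; summing, $|p(W')| \ge qM$. Since $\ell < (q+1)N^* \le 2qN^*$, this gives $|p(W')| \ge (M/(2N^*)) \cdot \ell$, so setting $c := M/(2N^*) = 1/2^{O(T)}$ proves the claim. The main subtlety — and the step that requires care — is the zigzag extension in the single-chunk bound: applying \cref{lem:difference-potential} naively to $P^*$ only controls subpaths of $R$ of length exactly $N^*$, but zigzagging (which exploits that walk potentials on a row depend only on endpoints) lets us match any target length $\ell \ge N^*$ while preserving the reference potential up to $\pm 1$. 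With that in place, the chunking is elementary, provided one is careful to absorb the residual $r$ into the \emph{final} piece rather than leave it as a separate, possibly short and uncontrolled, tail.
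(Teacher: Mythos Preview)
Your overall shape---build a reference path with large potential, compare via \cref{lem:difference-potential}, then chunk---is exactly the paper's approach. The gap is in your single-chunk bound, specifically the zigzag step.

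The contradiction in the proof of \cref{lem:difference-potential} comes from \emph{aligning} the two rows so that the endpoints $u_1,u_2$ lie in one column and $v_1,v_2$ in another, with the two vertical connectors each of length $2T+2$. That alignment is only possible when the two walks have the same \emph{horizontal displacement} between their endpoints; for paths on a row this coincides with length, but for a walk with back-and-forth zigzags it does not. Your auxiliary walk $W^\star$ has walk-length $\ell$ but displacement at most $|P^*|+1 \le N^*$, whereas $W'$ is a genuine path with displacement $\ell > N^*$. So the endpoints cannot be placed in matching columns at vertical distance $2T+2$, and the lemma's bound $|p(W')-p(W^\star)|\le 4T+4$ simply does not follow. (Indeed, since the potential of a walk on a row depends only on its endpoints, padding $P^*$ with zigzags gives you no new information: you are still only comparing $W'$ to a reference of displacement $\le N^*$, which controls only the length-$|P^*|$ subpaths of $W'$, not $W'$ itself.) Your single-chunk bound therefore has no proof, and since your chunking step rests entirely on it, the argument collapses.

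The paper avoids this by never trying to stretch the reference: it applies \cref{lem:difference-potential} only to subpaths of the target row having length \emph{exactly} $|W|$ (the length of the reference path), obtaining that every such subpath has potential in $[5T-4,\,13T+4]$, and then sums over consecutive length-$|W|$ blocks. Your absorb-the-remainder-into-the-last-piece trick is neat and would give a cleaner accounting than what the paper writes, but for it to work you would need a bound that holds for \emph{every} length $\ge N^*$, and the zigzag shortcut does not deliver that; you would have to derive it from the length-$|P^*|$ bound by a second layer of chunking, at which point you have essentially reproduced the paper's argument.
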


\begin{proof}
  Using the same adversarial strategy as in \cref{lem:chang-potential-boosting},
  build a path $W$ with potential increase $9T$ and at most $2^{10T}$ length.
  This is possible for large enough values of $T$ (and $n$).
  Because of \cref{lem:difference-potential}, any parallel path $W'$ satisfying its conditions must have potential between $5T-4$ and $13T+4$.
  Moreover, if we split a row into $m$ consecutive segments of length $\abs{W}$, the potential of the whole row is between $m(5T-4)$ and $m(13T+4)$.
  Hence, we can force paths of length $N = \omega(2^{10T})$ such that any of its subpaths $W'$ of length at least $\abs{W} \le 2^{10T}$ has potential increase at least $c\abs{W'}$ with $c = 4T / \abs{W} \ge 4T/2^{10T}$.
\end{proof}

\subsection{Boosting on Arbitrary Slopes}
\label{sec:slope-boosting}

In this section, we show how one can apply the idea of
\cref{lem:chang-potential-boosting} along arbitrary lines on the plane, not just
horizontal and vertical rows and columns.
The key insight here is that the general idea of potential boosting relies only
on the fact that the parity of the potential is uniquely determined by the
distance between the vertices and their colors, not the actual path taken.
Hence we can make this path follow an arbitrary fixed slope that is determined
by two grid points.

That being said, generally the path cannot follow the sloped line exactly as it
must be aligned with the grid points.
To account for this, we allow the path to deviate from the target slope by a
small amount that we can control.
In particular, we restrict the path to fully lie between two parallel lines
whose vertical distance is bounded.
One way to interpret this is to consider the smallest parallelogram enclosing
the construction with two of its sides being vertical and two aligned with the
target slope.

\begin{lemma}[Slope boosting]
  \label{lem:slope-boosting}
  Given an oriented rectangular grid $G$ on $n$ nodes, a target slope $\theta
  \in [0,2\pi)$, and a target potential $k = o(\log n)$, there exists an
  adversarial strategy against any \detolcl algorithm with locality $T = o(\log
  n)$ such that the adversary constructs a path $P$ in $G$ with the following
  properties:
  \begin{enumerate}
    \item $P$ contains two points $u'$ and $v'$ such that the potential between
    them is exactly $k$, and
    \item $P$ is fully contained in a parallelogram with two of its sides being
    vertical with length $k+1$, the other two sides having angle $\theta$, and
    the total width being $2^k \cdot O(T)$.
  \end{enumerate}
\end{lemma}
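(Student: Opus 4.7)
The plan is to mirror the inductive construction underlying \cref{lem:chang-potential-boosting}, but at each level we join the two sub-constructions via a \emph{staircase} connection path approximating the target slope $\theta$ (rather than a purely horizontal one), while tightly controlling the vertical extent of the bounding parallelogram.

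We induct on $k$. For $k = 0$ a single grid point suffices. For the inductive step, we first build two independent level-$(k-1)$ objects $L$ and $R$, far enough apart that their $T$-views do not intersect. By induction, each is contained in a parallelogram $\Pi_L, \Pi_R$ with vertical sides of length $k$ and angled sides of slope $\theta$, and contains a marked subpath $P_L$ (resp.\ $P_R$) from a point on its left vertical side to a point on its right vertical side with $p(P_L) = p(P_R) = k-1$ (fixing the sign positive). We position $\Pi_R$ as the translate of $\Pi_L$ along the direction $(\cos\theta, \sin\theta)$ so that the left vertical side of $\Pi_R$ lies at horizontal distance either $2T+2$ or $2T+3$ from the right vertical side of $\Pi_L$; the two choices differ by one and give the usual parity lever. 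Finally we reveal a staircase path $P_c$ from the right endpoint of $P_L$ to the left endpoint of $P_R$ that stays within one lattice unit of the ideal slope-$\theta$ line joining them.

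The parity argument runs exactly as in \cref{lem:chang-potential-boosting}: by the second part of \cref{lem:potential-properties}, the two gap choices produce connection paths whose potentials have opposite parities, so the adversary can force the parity of the total $p(P_L) + p(P_c) + p(P_R)$ to differ from $k-1$. Since $|p(P_c)|$ has parity different from $k-1$, we are in one of two cases: either $|p(P_c)| \ge k$ already, or else $|p(P_c)| \le k-2$ and the combined path has potential at least $2(k-1) - (k-2) = k$ with the correct positive sign. To extract a sub-path of potential exactly $k$ as demanded by the lemma, we apply the discrete intermediate value principle along the combined path (its potential changes by at most one per edge) to pick intermediate endpoints $u', v'$. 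If the sign we obtain at the top of the induction is $-k$ rather than $+k$, we simply re-run the construction with target slope $\theta + \pi$ to flip orientation.

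The main obstacle, and the only real departure from Chang's horizontal-row setup, is verifying the $k+1$ bound on the vertical sides of the bounding parallelogram; this tightness is crucial for the contradiction step in \cref{sec:contradiction}. The key observation is that $\Pi_L \cup \Pi_R$, being a parallelogram together with its translate along the slope direction, is itself contained in a parallelogram of slope $\theta$ with vertical sides of length exactly $k$. The staircase $P_c$ adds at most one lattice unit of vertical deviation beyond the ideal slope line (a monotone staircase approximating a fixed slope in the grid always stays within one unit of the target), so the whole level-$k$ construction fits in a parallelogram with vertical sides of length $k+1$, as required. Finally, the width satisfies the obvious recurrence — doubling plus an $O(T)$ additive connection-gap term — giving the stated $2^k \cdot O(T)$ total width.
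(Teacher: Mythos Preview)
Your inductive plan is the same as the paper's, but there is a genuine gap in your parity lever, and it interacts badly with the height bound you need.

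You import the $2T+2$ vs.\ $2T+3$ horizontal-gap trick verbatim from \cref{lem:chang-potential-boosting}. That trick works for a row because shifting by one horizontally changes the grid distance between the two endpoints by exactly one, hence flips the parity of $p(P_c)$ via \cref{lem:potential-properties}. In your sloped version the two placements of $\Pi_R$ are supposed to be translates \emph{along the $\theta$-direction}; but such a translate does not generally land on a lattice point, so you must round. After rounding, the two anchor positions at horizontal distances $2T+2$ and $2T+3$ may differ vertically by $1$ (e.g.\ for $\tan\theta=1/2$, the ideal heights $3.7$ and $4.2$ floor to $3$ and $4$). When that happens, the $L_1$ distance between the endpoints of $P_c$ changes by $1+1=2$, its parity does \emph{not} flip, and the adversary has no lever. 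If instead you keep the vertical coordinate fixed and translate purely horizontally, then $\Pi_R$ is no longer a translate of $\Pi_L$ along the slope direction, and your claim that $\Pi_L\cup\Pi_R$ sits in a height-$k$ $\theta$-parallelogram fails---you would accumulate $\Theta(1)$ extra height per level and end up with height $\Theta(k)$ with an uncontrolled constant, which is fatal for \cref{sec:contradiction}.

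The paper's fix is to abandon the horizontal lever entirely: fix the horizontal gap at $w_j+2T+2$ and take as the two choices the lattice points $D_-(u_1,i)$ and $D_+(u_1,i)$ immediately below and above the ideal slope line at that column. These differ by exactly one \emph{vertically}, so the $L_1$ distance changes by exactly one and the parity always flips; and since both are within one unit of the ideal point, both placements (and hence the whole level-$(j{+}1)$ object, once we reveal every node inside) fit in a $(j{+}1,\theta)$-parallelogram of height $j+2$. The $+1$ in height per level thus comes from the lattice rounding of the anchor, not from the connecting path.
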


We start by introducing some helpful terminology.
Consider the embedding $\tilde{G}$ of $G$ on the Euclidean plane that sends $v$
to the origin and places nodes of $G$ on the lattice $\Z \times \Z$ while
identifying the orientation of $G$ with the orientation of the plane and keeping
edges at unit length.
Recall that a line $L$ in $\tilde{G}$ is uniquely described by a point $v$ and a
slope $\theta$.
Here we only need to consider the case where $v \in V(G)$, so we cater our
definitions to it.
Given $i \in \N_0$, let $d(v,i) = d(v,i;\theta)$ be the unique (positive) real
such that $D(v,i) = (i,d(v,i)) \in L$.
We write $D_-(v,i) = D_-(v,i;\theta)$ for the node $v_- \in V(G)$ that has
coordinates $(i,\floor{d(v,i)})$ in $\tilde{G}$.
Meanwhile, we let $D_+(v,i) = D_+(v,i;\theta)$ denote the node $v_+ \in V(G)$
having coordinates $(i,\ceil{d(v,i)})$ in $\tilde{G}$ if $d(v,i)$ is not an
integer (i.e., $\floor{d(v,i)} \neq \ceil{d(v,i)}$), or $(i,d(v,i)+1)$
otherwise.
Note this means $D_-(v,i), D_+(v,i)$ are adjacent grid points along the same
column.
Intuitively, node $D_-(v,i)$ is on or below the line $L$ at horizontal distance
$i$ from $v$, and node $D_+(v,i)$ is above the line.

\begin{proof}
  Let $T = (1/100) \log n$.
  Without restriction, we consider only the case where $\theta \in [0,\pi/4]$.
  The construction for the general case is obtained by applying reflections
  along the horizontal and vertical axis or by swapping the two.

  As in \cref{lem:chang-potential-boosting}, our strategy is based on incrementally
  building self-similar objects while driving the potential difference up by at
  least one unit at every level.
  From one level to the next, the size of the construction should increase by
  only a multiplicative constant factor.
  Indeed, similar to \cref{lem:chang-potential-boosting}, the construction at
  level $j+1$ is obtained by creating two copies of it at level $j$ and
  adequately placing them relative to each other.
  
  The construction at level $0$ consists of a single node.
  Higher levels of the construction are shaped as the objects already mentioned
  in the statement of the lemma, namely parallelograms where two sides are
  vertical and of length equal to one plus the current level of the construction $j \le
  k$ and the two other sides have angle $\theta$ relative to the horizontal
  axis.  

  To simplify the description, we use the term \emph{$(j,\theta)$-parallelogram} to
  refer to such parallelograms.
  Given such a parallelogram $P$, we call the node of $P$ with minimal
  coordinates (in lexicographic order) the \emph{anchor node} of $P$.
  (Note node here refers to nodes of $G$ in the embedding $\tilde{G}$, not
  points on the plane.)
  We guarantee through the construction that every $(j,\theta)$-parallelogram
  has the same height $j+1$ and width $w_j$ and that these are both integral.
  $(j,\theta)$-parallelogram $P$ are at the same distance from one another.
  These technical details are somewhat tedious but needed to ensure we reach the
  desired height for the construction.

  Let us describe the construction inductively (see also
  \cref{fig:slope-boosting}).
  To obtain a parallelogram at level $j+1$ from two $(j,\theta)$-parallelograms
  $P_1$ and $P_2$ at level $j$, we proceed as follows:
  \begin{enumerate}
    \item Let $u_1$ and $u_2$ be the anchor nodes of $P_1$ and $P_2$,
    respectively.
    \item Set $i = w_j+2T+2$.
    Note that, since we assumed $\theta \in [0,\pi/4]$, this implies both
    $D_-(u_1,i)$ and $D_+(u_1,i)$ are at distance greater than $2T+1$ but still
    at most $O(T)$ from $P_1$.
    \item \label[step]{step:arbitrary-slopes-adaptive-step}
    Place $P_2$ relative to $P_1$ by placing $u_2$ on top of $D_-(u_1,i)$
    or $D_+(u_1,i)$, whichever of the two choices guarantees that the maximal
    potential difference increases by at least one unit.
    (This choice is made using the same reasoning as in the setting of
    \cref{lem:chang-potential-boosting}.)
    \item \label[step]{step:arbitrary-slopes-new-parallelogram}
    Recall $P_1$ and $P_2$ both have height $j+1$. 
    Hence, if we could place $P_2$ on $D(u_1,i)$ (the ideal point to follow the
    slope $\theta$), then we would be actually able to fit both $P_1$ and $P_2$
    in a $(j,\theta)$-parallelogram perfectly.
    Since both points $D_-(u_1,i)$ and $D_+(u_1,i)$ are at most unit distance
    from $D(u_1,i)$, it follows that there is a $(j+1,\theta)$-parallelogram $P$
    containing both $P_1$ and \emph{both the placement choices} for $P_2$.
    (In particular, this implies the shape of the $(j+1,\theta)$-parallelogram
    at the next level of the construction is uniquely defined.)
    \item Once $P$ is fixed, reveal all nodes inside it that have not been
    revealed previously, thus reaching the next level of the construction.
  \end{enumerate}
  Note the choice of placement for $P_1$ and $P_2$ ensures the relative position
  of the two is kept hidden from the algorithm until both $P_1$ and $P_2$ have
  been completely labeled.
  The argument for the increase in potential difference is the same as in
  \cref{lem:chang-potential-boosting}.
  
  \begin{figure}
    \centering
    \includestandalone[width=.9\textwidth]{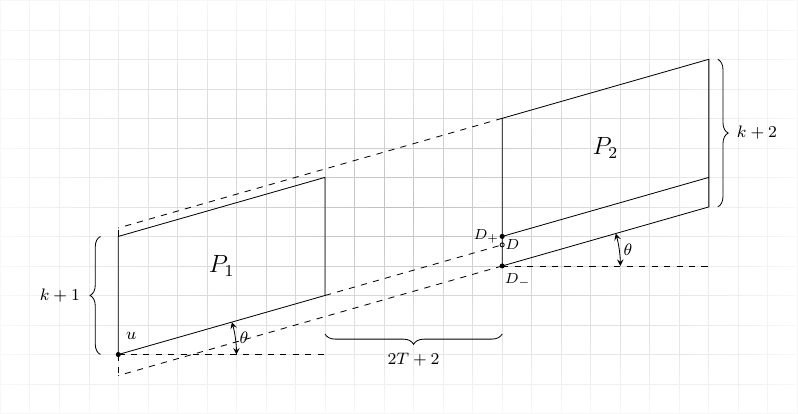}
    \caption{Illustration of the potential-boosting procedure along a slope
    $\theta$.
    Here we are placing two parallelograms $P_1$ and $P_2$ of height $k+1$
    relative to one another.
    The node $u$ is the anchor node of $P_1$.
    No matter which choice we make between placing the anchor node of $P_2$ at
    $D_-$ or $D_+$, the resulting parallelogram has height $k+2$ (and in fact
    encompasses both choices).}
    \label{fig:slope-boosting}
\end{figure}

  Due to the relative placement of $P_1$ and $P_2$, it is evident that we can
  guarantee the height and width guarantees mentioned previously.
  One technical issue concerns the construction at level $1$:
  There it is needed that the parallelograms have height $2$ as otherwise it
  might be that there is no path of $\tilde{G}$ that is fully contained within
  it.
  This is required so that the potential difference is defined between any two
  nodes in the same parallelogram (and also the sole reason why we only obtain
  height $k+1$ and not $k$ in the statement).

  Finally, we verify the size of the construction allows us to reach the desired
  level $j = k$:
  Clearly, we have $w_{j+1} = 2(w_j + T + 1)$.
  Since $w_0 = 0$, by solving the recursion, we obtain $w_k = 2(2^k-1)(T+1)$.
  As the final parallelograms have height $k+1$, we need thus at most $O(Tk)
  \cdot 2^k = n^{o(1)}$ nodes for the construction.
\end{proof}

\subsection{Obtaining a Contradiction}
\label{sec:contradiction}

In this section, we conclude the lower bound proof for the deterministic case.
We first sketch the main ideas intuitively.
We begin by constructing a path $W$ from some node $u$ to some node $v$ whose
shape resembles the letter \enquote{L}, that is, a row followed by a column. 
Taking advantage of the techniques already employed in
\cref{lem:difference-potential,lem:row-boosting}, we can guarantee that the
potential of this path is zero (see \cref{lem:L-path-construction} below). 
At this point, the shortest path $D$ from $v$ to $u$ proceeds in a diagonal
manner. 
Consider the combination of the path $W$ with $D$. 
Then its potential must be zero since it is a closed cycle that start from node
$u$, reaches $v$, and comes back to $u$.
Since the potential difference is unique, if we start from $u$ and follow any
path, then the potential reaches zero both at $v$ and at $u$.

We can use this fact to our advantage to obtain a contradiction. The idea is to
first observe that there must be two nodes $v_1$ and $u_1$ along the diagonal path
$D$ at which the potential assumes the same value up to a constant
(\cref{lem:diag-constant-potential}). Then, we construct another diagonal path
$R$ where the absolute value of the potential increases by $10T$
(\cref{lem:adverserial-strategy}). 
Finally we reveal that $R$ is close to $v_1$ and $u_1$. 

Now consider a path that starts from $v_1$, joins path $R$, follows it, and then
returns to $u_1$. 
The contradiction comes from the fact that the distance from the nodes $v_1$ and
$u_1$ and path $R$ is not enough to compensate for the growth of the potential that we
enforced, and thus now there is a closed path starting and ending at $v_1$ and
passing through $R$ that does not have zero potential.

\begin{definition}[L-path]\label{def:L-path}
  A path $W=(w_1,w_2,\ldots,w_j)$ is an L-path if there exists a node $w_i\in W$, where $i<j$, such that all nodes of subpath $(w_1,w_2,\ldots,w_i)$ belong to the same row, and all nodes of subpath $(w_i,w_{i+1},\ldots,w_j)$ belong to the same column.
\end{definition}

\begin{definition}[diagonal-path; see also \cite{bresenham1965}]\label{def:diagonal-path}
Let $u=(x_0,y_0)$ and $v=(x_1,y_1)$ be two nodes of the grid 
with $x_0 < x_1$, and let
$m = (y_1-y_0)/(x_1-x_0)$
	be the slope\footnote{By \emph{slope}, we interchangeably mean the ratio $m$ or the angle $\theta = \arctan(m)$.} of the Euclidean line segment joining the nodes $u$ and $v$. Without loss of generality, assume that $\abs{m}\leq 1$. The \emph{diagonal-path} from $u$ to $v$ is
    \[
      P = \left\{ (x,y)\in \mathbb{Z}^2 : 
      \text{$x_0 \leq x \leq x_1$ and $y = y_0 + \floor{m(x-x_0)}$}
      \right\}.
    \]	
\end{definition}

\begin{lemma}[L-path construction]\label{lem:L-path-construction}
  There exists an adversarial strategy to create an L-path $W$ from node $u$ to node $v$ such that:
  \begin{itemize}
    \item the row subpath of $W$ has length at least $2^{100T}$,
    \item the row subpath of $W$ is longer than the column subpath of $W$,
    \item $\abs{W} \le 2^{O(T)}$,
    \item $p(W)=0$,
    \item the revealed nodes are only the ones in $W$.
  \end{itemize}
\end{lemma}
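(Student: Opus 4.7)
The plan is to construct $W$ adaptively in two stages, first the row subpath and then the column, using the adversarial machinery of \cref{lem:row-boosting} in both horizontal and vertical directions.

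In the first stage I would apply \cref{lem:row-boosting} to reveal a row $R$ of length $N = 2^{100T}$ from $u$ to some corner node $v_c$. This yields $|p(R)| \ge cN$ with $c = 1/2^{O(T)}$, while \cref{lem:pot-upper-bound} supplies the matching upper bound $|p(R)| \le N/3 + O(1)$. Without loss of generality assume $p(R) > 0$; the other case is symmetric.

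In the second stage, from $v_c$ I would reveal nodes one at a time along one of the two vertical directions. Because the oriented grid is symmetric under $90^\circ$ rotation, the adversarial strategy of \cref{lem:row-boosting} ports to columns, so that any freshly revealed column segment of length at least $2^{10T}$ carries potential of magnitude at least $c$ times its length. This forces the cumulative potential from $v_c$ to be monotonic at scale $2^{10T}$ in one of the two vertical directions; I would pick the decreasing one. Since that cumulative potential starts at $0$, changes by at most one per edge, and eventually exceeds $|p(R)|$ in magnitude by the quasilinear rate, it must cross $-p(R)$ at some node; let $v$ be the first such node. Then $W$, obtained by concatenating $R$ with the column segment from $v_c$ to $v$, satisfies $p(W) = p(R) + p(C) = 0$ by construction, and the existence of $v$ is secured by an intermediate-value argument together with the quasilinear guarantee.

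For the length bounds, $|R| = N = 2^{100T}$ meets the first requirement, and $|C| \le |p(R)|/c + O(2^{10T})$, so that $|W| = 2^{O(T)}$. The revealed-node constraint is met because every adversarial revelation sits on either the horizontal line through $u$ or the vertical line through $v_c$, both of which lie in $W$. The hard part will be securing the strict inequality $|R| > |C|$: since the row and the column enjoy the same quasilinear bound with the same rate constant $c$, the worst-case column length $|p(R)|/c$ can match $|R|$. To pin down strict inequality I would calibrate the row construction so that $|p(R)|$ remains near its lower bound $cN$ rather than drifting towards $N/3$, and, if the observed rate of decrease along the column is too slow, adaptively extend the row by an extra buffer of $O(2^{10T})$ nodes before fixing $v$. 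The delicate point is orchestrating this adaptation online without ever revealing stray nodes outside the final L-path.
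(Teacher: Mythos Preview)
Your two-stage plan---reveal a row via \cref{lem:row-boosting}, then grow a column from its endpoint until the cumulative potential hits zero---is exactly the paper's construction, and your intermediate-value reasoning for locating $v$ is correct.

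The gap is in your handling of the inequality $|R| > |C|$. The column rate is only guaranteed to satisfy $c_2 \ge 1/2^{O(T)}$, while $|p(R)|$ may be as large as $|R|/3$ by \cref{lem:pot-upper-bound}, so in the worst case $|C| \approx |p(R)|/c_2$ exceeds $|R|$ by a \emph{multiplicative} factor of order $2^{O(T)}$, not an additive $O(2^{10T})$. Extending the row by an $O(2^{10T})$ buffer therefore cannot close such a gap; worse, any extension also raises $|p(R)|$ and hence the target for $|C|$, leaving the ratio $|C|/|R|$ essentially unchanged. Nor can the adversary ``calibrate'' $|p(R)|$ toward its lower bound, since it is the algorithm, not the adversary, that chooses the coloring.

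The paper resolves this without any online adaptation: after building both arms it simply swaps their roles if the column came out longer. This is legitimate because the entire lower-bound argument is symmetric under exchanging rows and columns (a $90^\circ$ rotation of the grid), so one may declare the longer arm to be the ``row subpath'' without loss of generality. With that relabeling all five bullet points follow immediately, and no stray nodes need be revealed.
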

\begin{proof}  

  Our argument is independent of the actual orientation of the edges of the grid and only cares about the direction in which the potential is increasing or decreasing. For this reason, we can assume that we are always forcing the algorithm to increase or decrease the potential in one specific direction. This is because, if this is not the case, we could go through the same construction by virtually flipping the orientation. For example, if we want the potential to increase from west to east but the algorithm is making it decrease, in our construction we can consider west to be east and west to be east. Thus, w.l.o.g. in the following argument, we consider the potential to be always increasing from west to east and decreasing from south to north.

  Acting as the adversary, we start from a node $w_1$ and we reveal nodes to the algorithm belonging to the same row of $w_1$, going neighbor by neighbor always from west to east. Let the path revealed this way be $W_1=(w_1,w_2,\ldots,w_i)$. Thanks to \cref{lem:row-boosting}, the algorithm is forced to nearly linearly increase the potential along this row. Now consider the node $w_i$. We reveal nodes to the algorithm that belong to the same column of $w_i$, going neighbor by neighbor always from south to north.  Let the path revealed this way be $W_2=(w_i,w_{i+1},\ldots,w_j)$. Applying the same reasoning as in the proofs of \cref{lem:difference-potential} and \cref{lem:row-boosting}, the algorithm linearly decreases the potential along this column. Notice that we can choose $W_2$ to be long enough so that $p(W_1)+p(W_2)=0$. Let $\abs{p(W_1)}=c_1(T)\abs{U_1}$ and $\abs{p(W_2)}=c_2(T)\abs{W_2}$, where $c_1$ and $c_2$ are functions defined as in the statement of \cref{lem:row-boosting}. Let $c(T)=\frac{c_1(T)}{c_2(T)}$ be such that $\abs{W_2}=c(T)\abs{W_1}$. From \cref{lem:pot-upper-bound} we know that $c_1(T)\le \frac 13$, and \cref{lem:row-boosting} guarantees that $c_2(T)\ge \frac{1}{2^{O(T)}}$, thus $c(T)\le \frac{2^{O(T)}}{3}$. This implies that, if we choose $\abs{W_1}=O(2^{100T})$ so that we satisfy the lemma statement, we always have enough room to grow $W_2$ so that we satisfy condition $p(W_1)+p(W_2)=0$, and this is because $\abs{W_2}=c(T)\abs{W_1}\le  \frac{2^{O(T)}}{3}\abs{W_1}$.  Moreover, we can always swap the roles of $W_1$ and $W_2$ after their construction, and thus also guarantee that $\abs{W_1} \ge \abs{W_2}$.  Finally, we remark that $\abs{W_1}+\abs{W_2} \le 2^{O(T)}\cdot 2^{100T} = 2^{O(T)}$. We conclude by observing that $W=(w_1,w_2,\ldots,w_i,w_{i+1},\ldots,w_j)$ is an $L$-path that satisfies the statement of the lemma by taking $u=w_1$ and $v=w_j$.
\end{proof}

The adversary reveals an L-path $W$ from node $u$ to node $v$ following the strategy in~\cref{lem:L-path-construction} such that one of the arms of the L-path has length at least $2^{100T}$ and forcing the algorithm to color it in such a way that $p(W)=0$. It then reveals to the algorithm a \emph{diagonal-path} $D_{u,v}$ from node $u$ to node $v$.
In order to prove a certain property about $D_{u,v}$, we need the following two
simple analytical observations.

\begin{lemma}[Discrete intermediate value theorem]
    \label{lem:ivt}
    Let $b, k \in \mathbb{Z}_+$, and let $f : [0,b]_{\mathbb{Z}} \to \mathbb{Z}$ be a function satisfying
	    $\abs{f(x+1) - f(x)} \leq k$ for all  $x \in [0,b-1]_{\mathbb{Z}}$
    with boundary conditions $f(0) \ge 0$ and $f(b) \le 0$.
    Then there exists an integer $x \in [0,b]_{\mathbb{Z}}$ such that
    $
        \abs{f(x)} \le k .
    $
\end{lemma}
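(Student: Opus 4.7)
The plan is to prove the discrete IVT by a straightforward "first crossing" argument, mirroring the standard continuous proof but adapted to the integer step constraint. The hypothesis $\abs{f(x+1)-f(x)} \le k$ plays the role of a Lipschitz bound, which is exactly what prevents $f$ from jumping over a zone of width $2k$ around zero.

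First, I would define $x^\star$ to be the least integer in $[0,b]_\Z$ with $f(x^\star) \le 0$. This index is well-defined because the set $\{x \in [0,b]_\Z : f(x) \le 0\}$ is nonempty (it contains $b$ by the boundary condition $f(b) \le 0$) and bounded below. The claim is that $x^\star$ witnesses the lemma. Upper bound: since $f(x^\star) \le 0 \le k$, we immediately have $f(x^\star) \le k$. Lower bound: I would split on whether $x^\star = 0$ or not. If $x^\star = 0$, then $f(0) \ge 0$ by hypothesis and $f(0) \le 0$ by choice, so $f(0) = 0$ and certainly $\abs{f(0)} \le k$. If $x^\star \ge 1$, then by minimality $f(x^\star - 1) > 0$, i.e., $f(x^\star - 1) \ge 1$ (since $f$ is integer-valued), and the step bound gives
\[
  f(x^\star) \;\ge\; f(x^\star - 1) - k \;\ge\; 1 - k \;>\; -k.
\]
Combining with $f(x^\star) \le 0$ yields $\abs{f(x^\star)} \le k - 1 \le k$, concluding the proof.

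I do not anticipate any real obstacle here: the only subtlety is the integrality of $f$, which upgrades $f(x^\star - 1) > 0$ to $f(x^\star - 1) \ge 1$ and is what makes the chosen step bound $k$ tight rather than $k+1$. Everything else is a one-line application of the given hypotheses.
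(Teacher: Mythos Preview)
Your proof is correct and follows essentially the same first-crossing idea as the paper: the paper takes the \emph{largest} index $a$ with $f(a) > k$ and argues $f(a+1) \in [-k,k]$, whereas you take the \emph{smallest} index $x^\star$ with $f(x^\star) \le 0$ and bound it from below via the step condition, but the underlying argument is identical.
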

\begin{proof}
	We may assume that $f(0) > k$ and $f(b) < -k$ as otherwise we could pick either $x = 0$ or $x = b$.
	Let $a < b$ be the largest integer such that $f(a) > k$.
	Now $f(a + 1)$ is either in range $[-k, k]$ or it is $< -k$.
	In the former case we are done and can pick $x = a+1$.
	In the latter case $\abs{f(a+1) - f(a)} \ge 2k$, which contradicts our assumption.
\end{proof}

\begin{lemma}[Discrete mean value theorem]
    \label{lem:mvt}
    Let $b, k \in \mathbb{Z}_+$, and let $f : [0,b]_{\mathbb{Z}} \to \mathbb{Z}$ be a function satisfying
    $\abs{f(x+1) - f(x)} \leq k$ for all  $x \in [0,b-1]_{\mathbb{Z}}$,
    with boundary conditions $f(0) = f(b) = 0$.  
    Fix an integer $\ell$ with $0 < \ell < \sqrt{b}$, and define
    \[
    g(x) =  f(x+\ell) - f(x), \ x \in [0, b-\ell]_{\mathbb{Z}}.
    \]
    Then there exists some $x \in [0, b-\ell]$ such that $\abs{g(x)} \le 2k$.
\end{lemma}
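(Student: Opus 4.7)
The plan is to combine a telescoping identity for $\sum_x g(x)$ with \cref{lem:ivt} applied to $g$ itself. First I would observe that $g$ inherits a Lipschitz bound from $f$: by the triangle inequality,
\[
  |g(x+1) - g(x)| \le |f(x+\ell+1) - f(x+\ell)| + |f(x+1) - f(x)| \le 2k,
\]
so $g$ is $2k$-Lipschitz on $[0, b-\ell]_{\mathbb{Z}}$.

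Second, I would compute $S := \sum_{x=0}^{b-\ell} g(x)$ by telescoping. All interior terms $f(\ell), f(\ell+1), \ldots, f(b-\ell)$ appear with opposite signs and cancel (noting $2\ell \le b$ follows from $\ell^2 < b$), leaving
\[
  S = \sum_{y=b-\ell+1}^{b} f(y) \;-\; \sum_{x=0}^{\ell-1} f(x).
\]
The boundary conditions $f(0) = f(b) = 0$ together with the $k$-Lipschitz property of $f$ give $|f(x)| \le kx$ for $x \in [0,\ell-1]_{\mathbb{Z}}$ and $|f(y)| \le k(b-y)$ for $y \in [b-\ell+1,b]_{\mathbb{Z}}$. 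Summing and applying the triangle inequality yields $|S| \le k\ell(\ell-1)$.

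Finally, I would argue by contradiction: suppose $|g(x)| > 2k$ for every $x \in [0,b-\ell]_{\mathbb{Z}}$. If there existed $x_0$ and $x_1$ with $g(x_0) > 2k$ and $g(x_1) < -2k$, then restricting $g$ (or $-g$, as needed so that the boundary hypotheses of \cref{lem:ivt} are satisfied) to $[\min(x_0,x_1),\max(x_0,x_1)]_{\mathbb{Z}}$ and invoking \cref{lem:ivt} with constant $2k$ would produce some $x$ with $|g(x)| \le 2k$, contradicting the assumption. Hence $g$ has constant sign, and WLOG $g(x) > 2k$ throughout. Then $S > 2k(b-\ell+1)$, which combined with the upper bound forces $2(b-\ell+1) < \ell(\ell-1)$, i.e., $b < (\ell+2)(\ell-1)/2 < \ell^2$, contradicting $\ell < \sqrt{b}$.

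The main obstacle I anticipate is the sign-handling in the third step: one has to carefully select a subinterval on which \cref{lem:ivt} can be applied in order to rule out that $g$ takes both very positive and very negative values. Once the constant-sign reduction is in hand, matching the linear-in-$b$ lower bound on $S$ against the quadratic-in-$\ell$ upper bound is a routine calculation that exploits precisely the hypothesis $\ell^2 < b$.
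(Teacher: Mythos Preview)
Your proof is correct, and the reduction to the constant-sign case via \cref{lem:ivt} is handled carefully. The route you take, however, differs from the paper's. After reducing to the case $g(x) > 2k$ for every $x$, the paper argues \emph{dynamically}: it tracks $f$ along multiples of $\ell$, observing that $f(j\ell) > 2kj$, and then uses the $k$-Lipschitz bound on $f$ to show that once $x \ge \ell^2$ the function $f$ can never return to a non-positive value, contradicting $f(b) = 0$. You instead argue \emph{globally} via the telescoping sum $S = \sum_x g(x)$: the boundary conditions bound $|S|$ above by $k\ell(\ell-1)$, while $g > 2k$ everywhere bounds $S$ below by $2k(b-\ell+1)$, and comparing the two forces $b < \ell^2$. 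Your averaging argument is a bit cleaner and more quantitative (it makes the role of the hypothesis $\ell^2 < b$ transparent as a single inequality), whereas the paper's trajectory argument is more hands-on but avoids any arithmetic with sums. Both approaches exploit exactly the same tension between the growth forced by $g > 2k$ and the boundary condition $f(b) = 0$.
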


\begin{proof}
If $g(0) = 0$, we are done.
W.l.o.g. assume $g(0) > 0$ as otherwise we consider the function $-f$.
We show there exists $x \in [0, b-\ell]_{\mathbb{Z}}$ such that $g(x) \le 0$; by \cref{lem:ivt} this then implies the claim as $g$ varies by at most $2k$ between consecutive indices.

Assume for contradiction that $g(x) > 0$ for all $x \in [0, b-\ell]_{\mathbb{Z}}$.
Indeed, we may even assume that $g(x) > 2k$ everywhere as otherwise we could pick $x$ where this does not hold.
Consider now the function $f$ under this assumption.
The first few values of $f$ are $f(0) = 0$, $f(\ell) > 2k$, $f(2\ell) > 4k$, and
so on since $g(x) > 2k$ everywhere, which forces $f$ to increase.
At the same time, $f$ can decrease under the assumption that the difference
between consecutive indices of $f$ is at most $k$.
Hence, we have $f(\ell^2) > 2kl$, $f(\ell^2+1) > 2k\ell-k$, $f(\ell^2+2) >
2k\ell-2k$, and so on, until finally $f(\ell^2+\ell-1) > 2k\ell - k(\ell-1) > 0$
and $f(\ell^2+\ell) > 2k^2+2k$.
Thus, for every $x \ge \ell^2$, the function $f$ never reaches a non-positive
value, contradicting $f(b) = 0$.
\end{proof}

With these technicalities in place, we now prove that we can find two nodes in
$D_{u,v}$ with \emph{constant} potential difference between them.

\begin{lemma}\label{lem:diag-constant-potential}
For every positive integer $i < 2^{50T}$, there exist nodes
	$u_1=(a,b)$ and $v_1=(a',b')$ in the diagonal-path $D_{u,v}$ with $\abs{a-a'} = i$ such that the diagonal
subpath from $u_1$ to $v_1$ has constant potential:
\[
	\abs{p\left(D_{u,v}[u_1,v_1]\right)} \leq 2.
\]
\end{lemma}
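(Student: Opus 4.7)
The plan is first to argue that $p(D_{u,v}) = 0$, then to control how the prefix potential along $D_{u,v}$ varies with horizontal distance, and finally to apply \cref{lem:mvt}. Concatenating the L-path $W$ with the reverse of $D_{u,v}$ yields a closed walk in the grid, which has zero potential by \cref{lem:potential-properties}(1); combined with $p(W)=0$ from \cref{lem:L-path-construction}, this gives $p(D_{u,v}) = 0$. Setting $b = x_1 - x_0$, I would then define $f \colon [0,b]_{\mathbb{Z}} \to \mathbb{Z}$ by letting $f(x)$ be the potential of the initial segment of $D_{u,v}$ from $u$ up to the unique node of $D_{u,v}$ with horizontal coordinate $x_0 + x$. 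The definition is insensitive to the particular order in which horizontal and vertical edges are traversed within a unit horizontal step, since the potential between fixed endpoints in the grid is path-independent by the closed-walk fact. We obtain $f(0) = 0 = f(b)$.

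The core of the argument is the bound $\abs{f(x+1) - f(x)} \le 1$. Between consecutive horizontal coordinates on $D_{u,v}$, the walk uses one horizontal edge and, since $\abs{m} \le 1$, at most one vertical edge. If only one edge is involved, the bound is immediate. If two edges $(u', v')$ and $(v', w')$ are involved, the key observation is that two consecutive edges can never both have potential $+1$ (that would force $c(v') = 1$ and $c(v') = 2$ simultaneously), nor both $-1$; a short case analysis on whether any of $u', v', w'$ has color $3$ then shows that $\abs{p(u',v') + p(v',w')} \le 1$ in every remaining configuration.

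With this bound in hand, I would apply \cref{lem:mvt} with $k = 1$ and $\ell = i$. By \cref{lem:L-path-construction}, the row subpath of $W$---which coincides with the horizontal extent $b$ of $D_{u,v}$---has length at least $2^{100T}$, so for any $0 < i < 2^{50T}$ we have $i < \sqrt{b}$ as required. \Cref{lem:mvt} then yields some $x^{\ast} \in [0, b-i]_{\mathbb{Z}}$ with $\abs{f(x^{\ast} + i) - f(x^{\ast})} \le 2$; taking $u_1$ and $v_1$ to be the nodes of $D_{u,v}$ at horizontal coordinates $x_0 + x^{\ast}$ and $x_0 + x^{\ast} + i$ gives the desired pair. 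The main obstacle I foresee is the per-step bound above: the naive count gives $2$ (two edges of potential at most one each), and obtaining the sharper value $1$---crucial for matching the lemma's constant of $2$---relies on the case analysis involving color $3$ outlined in the previous paragraph.
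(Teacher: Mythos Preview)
Your proof follows the same outline as the paper's: define the prefix-potential $f$ along $D_{u,v}$, obtain $f(0)=f(B)=0$ from path-independence of the potential (the paper phrases this as $p(D_{u,v}[u,v]) = p(W) = 0$), bound the per-step variation of $f$, and apply \cref{lem:mvt} with $\ell = i$. The one substantive difference is the per-step bound: the paper uses the naive $|f(x+1)-f(x)|\le 2$ coming from $\dist(w_x,w_{x+1})\le 2$, whereas your observation that two consecutive edges can never both carry potential $+1$ (nor both $-1$) yields the sharper $k=1$. With $k=1$, \cref{lem:mvt} gives $|g(x^\ast)|\le 2$ on the nose; with the paper's $k=2$, the stated \cref{lem:mvt} would formally only deliver $|g(x^\star)|\le 4$, so your refinement in fact tightens the argument to match the lemma's constant exactly.
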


\begin{proof}
	Let $u=(\alpha,\beta)$ and $v=(\alpha',\beta')$ and let $W_{\text{row}}$ denote the row sub-path of $W$ and set $B \coloneqq \abs{W_{\text{row}}}$. By the construction of the L-path (~\cref{lem:L-path-construction}), we have that the angle $\theta$ between the Euclidean line $\ell_{u,v}$ through $u$ and $v$ and the $x$-axis lies in $[0,\pi/4]$ and $\abs{W_{\text{row}}}\geq 2^{100T}$.
	
	For each $x \in [0,B]_{\mathbb{Z}}$, let $w_x=(a_x,b_x)$ denote the unique node on $D_{u,v}$ such that $\abs{a_x-\alpha}=x$, and define the function $f:[0,B]\rightarrow \mathbb{Z}$ such that $f(x) = p(D_{u,v}[u,w_x])$.

By construction, $f(0)=0$, and by \cref{lem:L-path-construction},
$
    f(B) = p\left(D_{u,v}[u,v]\right) = p\left(Q[u,v]\right) = 0.
$

Next, consider two nodes $w_x=(a_x,b_x)$ and $w_{x+1}=(a_{x+1},b_{x+1})$ in $D_{u,v}$.  
	Observe that $\abs{a_{x+1}-a_x}=1$ and $\dist(w_x,w_{x+1}) \leq 2$ (which follows from the construction of $D_{u,v}$ which closely follows $\ell_{u,v}$ and the fact that $\theta\in [0,\pi/4]$). Therefore, the value of $f$ on the two nodes differs by at most $2$. More specifically,
    $\abs{f(x+1)-f(x)} \leq 2$
    for all  $x \in [0,B-1]_{\mathbb{Z}}$.

Now fix $i<2^{50T}$ and set $\ell \coloneqq i$.  
Since $B \geq 2^{100T}$, it follows that $0<\ell<\sqrt{B}$.  
Define
    $g(x) \coloneqq f(x+\ell)-f(x)$ for $x \in [0,B-\ell]_{\mathbb{Z}}$.
Applying the discrete mean value theorem (\cref{lem:mvt}) with $k=2$ yields an index $x^\star\in[0,B-\ell]_{\mathbb{Z}}$ such that $\abs{g(x^\star)}\leq 2$.

Finally, let
    $u_1 \coloneqq w_{x^\star} = (a_{x^\star},b_{x^\star})$
    and
    $v_1 \coloneqq w_{x^\star+\ell} = (a_{x^\star+\ell},b_{x^\star+\ell})$.
Then $\abs{a_{x^\star+\ell}-a_{x^\star}} = \ell = i$, and
$
    p\left(D_{u,v}[u_1,v_1]\right) = f(x^\star+\ell)-f(x^\star) = g(x^\star)
$,
which implies $\abs{p(D_{u,v}[u_1,v_1])} \leq 2$. 
\end{proof}

Finally, we describe the adversarial strategy that concludes the proof.

\begin{lemma}\label{lem:adverserial-strategy}
	There exists an adversarial strategy to reveal  a path $R$ starting from some node $v_2 = (a_{v_2},b_{v_2})$ and ending at some node $u_2 = (a_{u_2},b_{u_2})$ such that 
	\begin{enumerate}
		\item\label{potential-parallelogram} $p(R) = 10T$ and the path $R$ is enclosed in a \emph{$(10T+1,\theta)$-parallelogram} of width $2^{10T}\cdot O(T)$  where $\theta$ is the angle made by the line joining $u$ and $v$ with the x-axis.
		\item\label{constant-potential-diagonal} There exists a diagonal-subpath $D_{u,v}[u_1,v_1]$ from node $u_1 = (a_{u_1},b_{u_1})$ to node $v_1 = (a_{v_1},b_{v_1})$ such that $\abs{a_{u_1}-a_{v_1}} = \abs{a_{u_2}-a_{v_2}}$ and $\abs{p(D_{u,v}[u_1,v_1])}\leq 2$.
		\item\label{distance-up-lb} $2T+2\leq \dist(D_{u,v}[u_1,v_1], R)\leq 12T+7$
	\end{enumerate}
Here, the distance between \(R\) and \(D_{u,v}[u_1,v_1]\) is defined as
\[
    \dist\left(R, D_{u,v}[u_1,v_1]\right)
    = \min \left\{ 
      \dist(x,y) \mid x \in V(R), y \in V\left(D_{u,v}[u_1,v_1]\right) 
    \right\}.
\]
\end{lemma}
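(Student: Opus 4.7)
Assuming the L-path $W$ from $u$ to $v$ (with $p(W)=0$) and the diagonal path $D_{u,v}$ have already been revealed as in the preceding discussion, and letting $\theta \in [0,\pi/4]$ be the slope of the Euclidean line $\ell_{u,v}$, the plan is to combine \cref{lem:slope-boosting} with \cref{lem:diag-constant-potential} in the right order and then perform a careful placement step.

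First I would invoke \cref{lem:slope-boosting} with target slope $\theta$ and target potential $k=10T$, obtaining a path that contains two nodes $v_2$ and $u_2$ such that the subpath $R$ between them satisfies $p(R)=10T$ and is enclosed in a $(10T+1,\theta)$-parallelogram of total width $w := 2^{10T}\cdot O(T)$. This immediately yields item~1. Because $R$ lives inside the parallelogram, its horizontal span $i := \abs{a_{u_2}-a_{v_2}}$ is at most $w$, and hence strictly smaller than $2^{50T}$ for $T=o(\log n)$. I would then feed this $i$ into \cref{lem:diag-constant-potential} to obtain nodes $u_1,v_1\in D_{u,v}$ with $\abs{a_{u_1}-a_{v_1}}=i$ and $\abs{p(D_{u,v}[u_1,v_1])}\le 2$, establishing item~2.

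For item~3 I would exploit the fact that \cref{lem:slope-boosting} builds the parallelogram as a self-similar object whose absolute embedding in the grid is not committed to the algorithm until the very last step of its construction. Since both $R$ (which, modulo unit-level discretization, tracks slope $\theta$ inside the parallelogram) and $D_{u,v}[u_1,v_1]$ follow slope $\theta$ and have matching horizontal span $i$, the adversary can place the parallelogram so that its bottom slanted edge runs parallel to the line through $u_1$ and $v_1$, at vertical offset exactly $2T+2$. This directly gives $\dist(R,D_{u,v}[u_1,v_1])\ge 2T+2$ and in particular keeps the $T$-neighborhoods of the newly revealed nodes disjoint from the previously revealed structure, so that the slope-boosting strategy remains consistent. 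The upper bound then follows by noting that every point of $R$ lies at vertical distance at most $(2T+2)+(10T+1)=12T+3$ from the line through $u_1$ and $v_1$, and then absorbing the $O(1)$ slack introduced by the grid discretization of both $R$ and $D_{u,v}[u_1,v_1]$ to reach $\dist(R,D_{u,v}[u_1,v_1])\le 12T+7$.

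The main obstacle I expect is precisely this placement step: one must verify that the parallelogram can actually be laid out at the prescribed offset without overlap with previously revealed nodes (the L-path $W$, the diagonal $D_{u,v}$, or their $T$-neighborhoods), and that the distance arithmetic closes with exactly the constants $2T+2$ and $12T+7$ rather than looser ones. Both points are ultimately a matter of carefully tracking the height, width, and slope bookkeeping provided by \cref{lem:slope-boosting}; the reason the statement there was formulated so precisely is so that this final accounting goes through cleanly and the subsequent contradiction argument can appeal to the resulting numeric thresholds.
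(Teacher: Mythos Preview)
Your proposal is correct and follows essentially the same approach as the paper: invoke \cref{lem:slope-boosting} with $k=10T$ to obtain $R$ inside a $(10T+1,\theta)$-parallelogram, use its bounded horizontal span to feed \cref{lem:diag-constant-potential}, and then fix the parallelogram's absolute position at vertical offset $2T+2$ from the diagonal while aligning horizontal spans. The only cosmetic difference is that the paper makes the constant $12T+7$ explicit by noting $D_{u,v}$ itself lies in a $(4,\theta)$-parallelogram (so the total vertical extent is $4+(2T+2)+(10T+1)$), whereas you reach the same number by an ``$O(1)$ discretization slack'' argument.
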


\begin{proof}
	By construction, the diagonal path $D_{u,v}$ can be enclosed in a $(4,\theta)$-parallelogram. The adversary reveals a path $R$ following the strategy in~\cref{lem:slope-boosting} with $p(R) = 10T$ and the path being enclosed in a $(10T+1,\theta)$-parallelogram of horizontal span $2^{10T}\cdot O(T)$ such that the vertical distance between the two parallelograms is $2T+2$ (The exact positioning of the revealed path will be fixed soon). Revealing such a path and forcing the quasilinear boosting on the path is possible as $T= o(\log n)$ and such a path is at a distance at least $2T+2$ from the diagonal-path $D_{u,v}$ as well as the L-path $W$, hence ensuring that the relative positions of the revealed paths remain indistinguishable to the algorithm. This proves~\cref{potential-parallelogram}. 

Since $R$ is contained in a parallelogram of width $2^{10T}\cdot O(T)$, we obtain
$
    \abs{a_{u_2}-a_{v_2}} \leq 2^{10T}\cdot O(T)
$. 
By \cref{lem:diag-constant-potential}, there exist nodes 
$u_1=(a_{u_1},b_{u_1})$ and $v_1=(a_{v_1},b_{v_1})$
with $\abs{a_{u_1}-a_{v_1}}=\abs{a_{u_2}-a_{v_2}}$ such that the diagonal subpath $D_{u,v}[u_1,v_1]$ satisfies 
$
	\abs{p\left(D_{u,v}[u_1,v_1]\right)}\leq 2
$,
establishing~\cref{constant-potential-diagonal}.
 The adversary positions the parallelogram of $D_{u,v}[u_1,v_1]$  keeping $u_1,v_2$ and $v_1,u_2$ aligned on common vertical axes while maintaining the vertical distance of $2T+2$ between the two parallelograms. The dimensions of the parallelograms imply
$
    2T+2 \leq \dist(D_{u,v}[u_1,v_1],R) \leq 12T+7
$, 
as required in~\cref{distance-up-lb}.  
\end{proof}

We now have all the ingredients to conclude the deterministic lower bound.
The adversary reveals the paths  $P_{v_2,u_1}$ and $P_{v_1,u_2}$,  which are the
shortest paths connecting  the endpoints of $D_{u,v}[u_1,v_1]$ and $R$. 
Each of these length at least $2T+2$.
Consider the closed walk
\[
    W = D_{u,v}[u_1,v_1] \circ P_{v_1,u_2} \circ R \circ P_{v_2,u_1}.
\]
By \cref{lem:pot-upper-bound}, the potential of each of the connecting paths is
bounded, that is, both $\abs{p(P_{v_1,u_2})}$ and $\abs{p(P_{v_2,u_1})}$ are at
most $\tfrac{12T+7}{3} + c$ for some constant $c$.
Hence, if we choose $n$ large enough, then
$
    \abs{p(W)} \geq \abs{10T - 2\cdot \tfrac{12T+7}{3} - 2c} > 0
$, 
contradicting \cref{lem:potential-properties} and concluding the proof. 


\section{\boldmath Lower Bound for \ROlcl}
\label{sec:rand-lb}

In this section, we lift our lower bound against \detolcl to \emph{randomized}
\onlinelocal.
Recall that we must work with an \emph{oblivious} instead of an adaptive
adversary.
The idea is similar to the proof in \textcite{akbari25_online_stoc}:
Whenever our adversary makes an adaptive choice while executing the processes
described throughout \cref{sec:det-lb}, we instead blindly \emph{guess} one of
the choices available.
The probability of guessing all the correct choices needed to cause the
algorithm to fail is indeed small, but if we have a reasonable number of choices
and alternatives, then through repetition we can argue that our adversary
succeeds with at least constant probability, thus \enquote{breaking} the \rolcl
algorithm.

During this process, the key quantity to keep in mind is how many nodes from
our \emph{budget} are consumed by each choice we make (since the entire
construction entails revealing nodes to the algorithm) as well as what the
number of repetitions implies about how large this budget needs to be.
As an example, suppose that in every attempt to prove the algorithm wrong we
need to make $a$ many choices from $b$ many equally good alternatives and
consume $c$ many nodes in the process.
The success probability of one such attempt is $1/b^a$, and so we wish to repeat
this process $b^a$ many times to obtain a success probability larger than $1-1/e
> 1/2$.
In turn, this means we must have $b^a c \le n$ since otherwise we would have
exceeded the total number $n$ of nodes in the graph.

In what follows, we systematically consider all the choices and the number of
alternatives for the choices made during the process of \cref{sec:det-lb}.
We stress that this is the only adaptation that needs to be made and all of the
other details are identical to the deterministic case.

\subparagraph{Quasilinear potential boosting (\cref{sec:qlinear-boosting}).}
This is the simplest part to argue since \citeauthor{akbari25_online_stoc}
already gave a generalization of the potential boosting strategy to the
randomized case, where it succeeds with high probability and uses at most
$n^{o(1)}$ many nodes \cite[Lemma~8.6]{akbari25_online_stoc}.
The only detail we need to consider is in \cref{lem:difference-potential} where
we potentially reveal two of these paths; however, that certainly does not
change the budget usage of $n^{o(1)}$ since it is done only once.

\subparagraph{Boosting on arbitrary slopes (\cref{sec:slope-boosting}).}
When executing the procedure described in the proof of
\cref{lem:slope-boosting}, once $\theta$ and $k$ are fixed, the only choice to
be made is in \cref{step:arbitrary-slopes-adaptive-step} (i.e., choosing the
relative placement of the two parallelograms between two fixed options).
Note that in \cref{step:arbitrary-slopes-new-parallelogram} we already have
taken care to ensure that the subsequent stages of the construction can be made
oblivious to this choice (i.e., their shapes and positioning are the same
however the placement is made).

An important detail is that we insist that our adversary succeeds in \emph{every
single one} of the $k$ steps in the construction.
This is because, as already discussed in the introduction, it is imperative that
the height of the parallelograms is kept under control and, in particular, we
wish to keep it at $k+1$.
The naive approach would be to repeat the construction as-is multiple
times---however, this is not feasible since there are in total roughly $2^k$
choices in total to be made up to the $k$-th level of the construction.
Since every choice fails with probability $1/2$, this would require $2^{2^k}$
repetitions, which potentially blows up our budget since we the best upper bound
on $k$ we have is $k = o(\log n)$.

Instead, we adopt the strategy used in \cite[Lemma~8.6]{akbari25_online_stoc}.
Namely, we repeat the construction constantly many times \emph{at every level}
so that at every level our success probability is at least $1/2$.
Hence the overall construction at level $k$ succeeds with $1/2$ probability, its
size remains $2^{O(k)} = n^{o(1)}$, and we need only $2^k = n^{o(1)}$
repetitions to ensure success with high probability.
Moreover, we can do the repetitions at every level in a smart way \emph{that
does not increase the height} of the resulting object.
To achieve this, we align the different copies of the construction at the
previous level $i$ so they all fit in the same $(i+1,\theta)$-parallelogram.
(Recall this is a detail that we guarantee in the proof of
\cref{lem:slope-boosting}.)
Hence we preserve the height $k+1$ for the $k$-th level of the construction
after this modification.

\subparagraph{Obtaining a contradiction (\cref{sec:contradiction}).}

The construction of \cref{sec:contradiction} requires multiple choices.
The first of these is the orientation of the L-path in
\cref{lem:L-path-construction}, though this is simple as there are only
constantly many choices available.
Since $T$ is known, once this choice is fixed, then the lengths of the two paths
$W_1$ and $W_2$ in the same lemma can be set non-adaptively, where the choice of
the lengths here comes from a space of $2^{O(T)} = n^{o(1)}$ alternatives.
Having done so, we have nodes $u$ and $v$ defining a diagonal path $D_{u,v}$ and
need to determine nodes $u_1$ and $v_1$ as in
\cref{lem:diag-constant-potential}, for which again there are at most
$\abs{D_{u,v}} = 2^{O(T)} = n^{o(1)}$ choices each.
Finally, we must construct the path $R$ of \cref{lem:adverserial-strategy}, for
which we invoke the reasoning already developed above.
In total, we are compounding a \emph{constant} number of choices over $n^{o(1)}$
alternatives in this step, and hence can obtain a \emph{constant} success
probability for the adversary without exceeding the budget of $n$ nodes.

\printbibliography

\appendix

 \end{document}